\title{Line Search for an Oblivious Moving Target}
\author{Jared Coleman}{University of Southern California, USA \and \url{https://jaredraycoleman.com} }{jaredcol@usc.edu}{https://orcid.org/0000-0003-1227-2962}{}
\author{Evangelos Kranakis}{Carleton University, Ottawa, Ontario, Canada \and \url{https://people.scs.carleton.ca/~kranakis/} }{kranakis@scs.carleton.ca}{https://orcid.org/0000-0002-8959-4428}{Research supported in part by NSERC Discovery grant.}
\author{Danny Krizanc}{Wesleyan University, Middletown CT, USA \and \url{http://dkrizanc.web.wesleyan.edu/} }{dkrizanc@wesleyan.edu}{}{}
\author{Oscar Morales-Ponce}{California State University, Long Beach, CA, USA \and \url{https://home.csulb.edu/~omorales/} }{Oscar.MoralesPonce@csulb.edu}{https://orcid.org/0000-0002-9645-1257}{}
\authorrunning{J. Coleman and E. Kranakis and D. Krizanc and O. Morales-Ponce} 
\keywords{Infinite Line, Knowledge, Oblivious, Robot, Search, Search-Time, Speed, Target} 
\begin{document}

\setcounter{tocdepth}{10}

\maketitle

\begin{abstract}
Consider search on an infinite line involving an autonomous robot starting at the origin of the line and an oblivious moving target at initial distance $d \geq 1$ from it. The robot can change direction and move anywhere on the line with constant maximum speed $1$ while the target is also moving on the line with constant speed $v>0$ but is unable to change its speed or direction. The goal is for the robot to catch up to the target in as little time as possible.

The classic case where $v=0$ and the target's initial distance $d$ is unknown to the robot is the well-studied ``cow-path problem''. Alpert and Gal~\cite{alpern2003theory} gave an optimal algorithm for the case where a target with unknown initial distance $d$ is moving {\em away} from the robot with a known speed $v<1$. In this paper we design and analyze search algorithms for the remaining possible knowledge situations, namely, when $d$ and $v$ are known, when $v$ is known but $d$ is unknown, when $d$ is known but $v$ is unknown, and when both $v$ and $d$ are unknown. Furthermore, for each of these knowledge models we consider separately the case where the target is moving away from the origin and the case where it is moving toward the origin. We design algorithms and analyze competitive ratios for all eight cases above. The resulting competitive ratios are shown to be optimal when the target is moving towards the origin as well as when $v$ is known and the target is moving away from the origin.
\end{abstract}

\newpage
\section{Introduction}\label{sec:intro}
Search is important to many areas of computer science and mathematics and has received the attention of numerous studies. 
In the simplest search scenario, one is interested in the optimal trajectory of a single autonomous mobile agent (also referred to simply as a robot) tasked with finding a target placed at an unknown location on the infinite line. The line search problem is to give an algorithm for the agent so as to minimize the competitive ratio defined as the supremum over all possible target locations of the ratio of the time the agent takes to find the target and the time it would take if the target's initial position was known to the robot ahead of time. This classic problem has led to many variations (see~\cite{alpern2003theory} for more on its history).

In this paper we consider an extension of the line search problem involving an autonomous robot and an oblivious moving target. The search is again performed on an infinite line and concerns an autonomous robot starting at the origin of the line but differs from the previously studied case in that the search is for a {\em moving} target whose speed and direction are not necessarily known to the searching robot. 
The robot starts at the origin and the target at an arbitrary distance $d$ from the origin. 
The target is moving with constant speed and is oblivious in that it cannot change its speed and/or direction of movement. 
We consider and analyze several alternative knowledge-based scenarios in which the target's speed and initial distance from the origin may be known or unknown to the searching robot. 
The case where a target with unknown initial distance from the origin is moving away from the origin was solved by Alpern and Gal~\cite{alpern2003theory}. 
As far as we are aware, these are the first results for the remaining cases. 

\subsection{Notation and terminology}\label{sec:notation}
On the infinite real line, consider an autonomous robot which is initially placed at the origin whose maximum speed is $1$ and an oblivious robot (also referred to as the moving target) initially placed at a distance $d$ to the right or left of the origin and moving with constant speed $v > 0$. 
As is usually done in linear search and in order to avoid trivial considerations on the competitive ratio by adversarially placing the target very close to the robot, we assume that $d$ is not smaller than the unit distance, i.e., $d \geq 1$. 

The target may be moving away from or toward the origin. If it is moving away, we assume its speed is strictly less than $1$ as otherwise the problem can not be solved. 
Further, we assume that the autonomous robot knows the direction the target is moving (away from or toward the origin).
The search is completed as soon as the robot and target are co-located. 

The movement of the autonomous robot is determined by a trajectory which is defined as a continuous function $t \to f(t)$, with $f(t)$ denoting the location of the robot at time $t$. 
Moreover, it is true that $|f(t) - f(t')| \leq u |t-t'|$, for all $t, t'$, where $u$ is the speed of the agent (be that the searching robot or the oblivious target). The autonomous robot can move with its own constant speed and during the traversal of its trajectory it may stop and/or change direction instantaneously and at any time as specified by the search algorithm.

A search strategy is a sequence of movements followed by the robot. The competitive ratio of a search strategy $X$, denoted $CR_X$, is defined as the supremum over all possible initial target locations and speeds of the ratio of the time the agent takes to find the target and the time it would take if the target's initial position was known to the robot ahead of time. The competitive ratio of a certain type of search problem is the infimum of $CR_X$ taken over all possible strategies $X$ for this problem. By abuse of notation we may drop mention of $X$ when this is easily implied from the context.

Our goal in this paper is to prove bounds on the competitive ratios of algorithms under four different knowledge models:
\begin{enumerate}
  \item {\tt FullKnowledge}: The robot knows both the target's speed $v$ and its initial distance $d$.
  \item {\tt NoDistance}: The robot knows the target's speed $v$ but not its initial distance $d$.
  \item {\tt NoSpeed}: The robot knows the target's initial distance $d$, but not its speed $v$.
  \item {\tt NoKnowledge}: The robot knows neither the target's speed $v$ nor its initial distance $d$.
\end{enumerate}
For all knowledge models, the robot does not know the target's initial position.
We study each of the above knowledge models for the case where the target is moving toward the origin ({\tt Toward}) and where it is moving away ({\tt Away}) from the origin.
In each case, we assume the robot knows the direction of travel of the target. 

\subsection{Related Work}\label{sec:related}

Several research papers have considered the search problem for a robot searching for a static (fixed) target placed at an unknown location on the real line, see~\cite{BCR93,schuierer2001lower}. The problem was first independently considered in a stochastic setting by Bellman and Beck in the 1960's (cf. \cite{beck1964linear,bellman1963optimal} as well as \cite{BCR93,schuierer2001lower}). In a deterministic setting it is now well known that the optimal trajectory for this single agent search uses a doubling strategy whose trajectory attains a competitive ratio of $9$. Linear search has attracted much attention and been the focus of books including~\cite{ahlswede1987search,alpern2003theory,stone1975theory}.

The case of a moving target appears to have been first considered by McCabe \cite{mccabe1974}. In that paper, the problem of searching for an oblivious target that follows a Bernoulli random walk on the integers is considered. For the case of a deterministic oblivious searcher, the only result we are aware of us is found in Alpern and Gal \cite{alpern2003theory}. There 
they consider the case where the target is moving away from the origin at a constant speed $v<1$ which is known to the searching robot. Only the initial distance of the target is unknown. They give an algorithm with optimal competitive ratio for this case. 

Our problem is reminiscent of the problem of catching a fugitive in a given domain which is generally referred to as the cops and robbers problem \cite{anthony2011game}.
The main difference is that in those problems, the target (robber) is itself an autonomous agent. 
As a result, the techniques considered there do not apply to our case. 

Our problem is  also related to rendezvous (of two robots) on an infinite line but it differs because in our case only one of the robots is autonomous while the other is oblivious. Related studies on the infinite line include rendezvous with asymmetric clocks \cite{czyzowicz2018linear}  and asynchrnous deterministic rendezvous \cite{de2006asynchronous}. More recent work on linear search concerns searching for a static target by a group of cooperating robots, some of which may have suffered either crash \cite{czyzowicz2019searchcrash} or Byzantine \cite{czyzowicz2021searchbyz} faults. 

\subsection{Results of the paper}\label{sec:results}

In all situations considered it is unknown to the robot whether the target is initially to the left or to the right of the origin. 
We analyze the competitive ratio in four situations which reflect what knowledge the robot has about the target.
We present results on the {\tt FullKnowledge} model (the robot knows $v$ and $d$) in Section~\ref{sec:full_knowledge}, the {\tt NoDistance} model (the robot knows $v$ but not $d$) in Section~\ref{sec:no_dist}, the {\tt NoSpeed} model (the robot knows $d$ but not $v$) in Section~\ref{sec:no_speed}, and the {\tt NoKnowledge} model (the robot knows neither $v$ nor $d$) in Section~\ref{sec:no_knowledge}.
For each of these models we study separately the case when the target is moving away or toward the origin (this knowledge being available to the robot).
The results are summarized in Table~\ref{tab:results}.
We conclude with a summary and additional open problems.
\begin{table}[]
    \centering
    \begin{tabular}{| c | l | l | c |}
    \hline
    \textbf{Knowledge}                   & \textbf{Movement} & \textbf{Competitive Ratio}     & \textbf{Section}                        \\ \hline
    \multirow{2}{*}{$v,d$} & {\tt Away}        & $CR=1+ \frac{2}{1-v}$ & \ref{sec:full_knowledge_away}   \\ \cline{2-4} 
                                         & {\tt Toward}      & 
      \begin{tabular}[c]{@{}l@{}}$CR = 1+\frac{2}{1+v}$ \hspace{5.7em}if  $v < 1$ \\ $CR=1+\frac{1}{v}$ \hspace{6.6em}otherwise\end{tabular}   & \ref{sec:full_knowledge_toward} \\ \hline
    \multirow{2}{*}{$v$}    & {\tt Away}        & $CR = 1+8\frac{1+v}{(1-v)^2}$ & \ref{sec:no_dist_away}~\cite{alpern2003theory}      \\ \cline{2-4} 
     &
      {\tt Toward} &
      \begin{tabular}[c]{@{}l@{}}$CR = 1+\frac{1}{v}$ \hspace{6.6em}if  $v \geq \frac{1}{3}$ \\ $CR = 1+8\frac{1-v}{(1+v)^2}$ \hspace{4em}otherwise\end{tabular} &
      \ref{sec:no_dist_toward} \\ \hline
    \multirow{2}{*}{$d$} &
      {\tt Away} &
      \begin{tabular}[c]{@{}l@{}}$CR \leq 5$  \hspace{8.45em}if $v \leq \frac{1}{2}$\\ $CR \leq 1+16\frac{\left(\log\frac{1}{1-v}\right)^2}{(1-v)^4}$ \hspace{2em}otherwise\end{tabular} &
      \ref{sec:no_speed_away} \\ \cline{2-4} 
                                         & {\tt Toward}      & $CR = 3$                       & \ref{sec:no_speed_toward}       \\ \hline
    \multirow{2}{*}{$\emptyset$} &
      {\tt Away} &
      \begin{tabular}[c]{@{}l@{}}$CR \leq 1 + \frac{16}{d} \left[ \log \log \left( \max\left(d, \frac{1}{1-v}\right)\right) + 3\right]$\\ \hspace{4.5em}$\cdot \max\left(d, \frac{1}{1-v}\right)^{8} \cdot \log^2\left[ \max\left(d, \frac{1}{1-v}\right) \right]$\end{tabular} &
      \ref{sec:no_knowledge_away} \\ \cline{2-4} 
                                         & {\tt Toward}      & $CR = 1+\frac{1}{v}$           & \ref{sec:no_knowledge_toward}   \\ \hline
    \end{tabular}
    \caption{Table of competitive ratio bounds proven for each knowledge model for cases with the target moving away from or towards the origin with speed $v$ and initial distance $d$ from the robot which is moving with speed $1$. Equalities indicate that tight upper and lower bounds are proven.}
    \label{tab:results}
\end{table}

\section{The {\tt FullKnowledge} Model}
\label{sec:full_knowledge}

We first study the model where the robot knows the target's speed $v$ and its initial distance from the origin $d$.

\subsection{The {\tt FullKnowledge/Away} Model}
\label{sec:full_knowledge_away}

For the case when the target is moving away from the origin, clearly if $v \geq 1$ then the robot can never catch the target.
Thus, for this model (and all other {\tt Away} models), we assume $v < 1$.
In this section, we will analyze an algorithm where the robot chooses a direction and moves for time $\frac{d}{1-v}$. 
If the robot does not find the target after moving for time $\frac{d}{1-v}$ in one direction, then it changes direction and continues moving until it does.

\begin{algorithm}[H]
  \caption{Online Algorithm for {\tt FullKnowledge/Away} Model}\label{alg:full_knowledge_away}
  \begin{algorithmic}[1]
    \State {{\bf input:} target speed $v$ and initial distance $d$}
    \State{choose any direction and go for time $\frac d{1-v}$}
    \If{target not found}
      \State{change direction and go until target is found}
    \EndIf
  \end{algorithmic}
\end{algorithm}

\begin{theorem}\label{thm:full_knowledge_away}
  For the {\tt FullKnowledge/Away} model, Algorithm~\ref{alg:full_knowledge_away} has an optimal competitive ratio of
  \begin{equation}
    \label{eq:full_knowledge_away}
    1 + \frac{2}{1-v}.
  \end{equation}
\end{theorem}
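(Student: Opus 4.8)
The plan is to prove matching upper and lower bounds on the competitive ratio. For the upper bound, I would analyze Algorithm~\ref{alg:full_knowledge_away} directly. The key observation is that the offline optimum is easy to compute: if the target starts at distance $d$ and moves away at speed $v$, the robot moving straight at it catches it at time $\frac{d}{1-v}$, so the denominator in the competitive ratio is always exactly $\frac{d}{1-v}$, regardless of which side the target is on. Now I split into two cases according to whether the algorithm guessed the correct direction. If it guessed correctly, it catches the target at time exactly $\frac{d}{1-v}$, giving ratio $1$. If it guessed wrong, it travels for time $\frac{d}{1-v}$ away from the target (so the robot is now at distance $\frac{d}{1-v}$ from the origin on the wrong side), reverses, and must now chase a target that started at distance $d$ on the other side and has been receding. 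At the reversal time $t_0 = \frac{d}{1-v}$ the target is at distance $d + v t_0$ from the origin on the correct side, so the gap between robot and target is $t_0 + d + v t_0 = d + (1+v)t_0$; closing this gap at relative speed $1-v$ takes an additional $\frac{d + (1+v)t_0}{1-v}$, for a total time of $t_0 + \frac{d+(1+v)t_0}{1-v}$. Substituting $t_0 = \frac{d}{1-v}$ and simplifying, I expect this to come out to $\frac{d}{1-v}\left(1 + \frac{2}{1-v}\right)$ (i.e.\ $t_0 \cdot \left(1 + \frac{2}{1-v}\right)$), so that dividing by the offline cost $\frac{d}{1-v}$ yields exactly $1 + \frac{2}{1-v}$. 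Since this is the worst case over the two possibilities and $d$ cancels, $CR \le 1 + \frac{2}{1-v}$.

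For the lower bound, I would use an adversary argument. Since $d$ is known, the adversary's only freedom is the side and (trivially) that the target moves at the known speed $v$. The standard trick: whichever direction the robot commits to first — formally, consider the first time $t^*$ at which the robot's trajectory reaches distance $d$ from the origin on some side; wlog say the right — the adversary places the target on the left. Before time $t^*$ the robot cannot have caught a target on the left (it would have had to be at a point at distance $\ge d - $ something... more carefully: the target on the left at time $t$ is at position $-(d + vt)$, and the robot, not having reached $-d$... hmm, one must be a little careful here). The cleaner version: let $t^*$ be the first time the robot is at position $d$ or at position $-d$; by symmetry assume it is at $+d$ at time $t^*$, and note $t^* \ge d$. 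The adversary puts the target on the left. The target is at $-(d+vt^*)$ at time $t^*$, the robot is at $+d$, and there is no way the robot reached the left target earlier (one checks the robot's position stays $> -d$ hence $> -(d+vt)$ for $t < t^*$). From time $t^*$ on, the robot is behind by at least $d + (d + v t^*) = 2d + v t^*$ and gains at rate at most $1 - v$, so the catch time is at least $t^* + \frac{2d + v t^*}{1-v}$. Minimizing the resulting ratio $\frac{1}{d/(1-v)}\left(t^* + \frac{2d+vt^*}{1-v}\right)$ over $t^* \ge d$: the expression is increasing in $t^*$, so the minimum is at $t^* = d$, giving $\frac{1-v}{d}\left(d + \frac{2d+vd}{1-v}\right) = (1-v) + (2+v) = 1 + \frac{2}{1-v}$ after simplification. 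Hence $CR \ge 1 + \frac{2}{1-v}$, matching the upper bound.

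The main obstacle is making the lower-bound argument airtight — in particular, defining $t^*$ correctly and verifying rigorously that the robot genuinely cannot catch the left-side target before $t^*$ (a robot that dithers near the origin, or makes partial excursions to both sides, must be handled). The right formalization is: let $t^*$ be the infimum of times at which $|f(t)| = d$; if the target is never caught before the robot "commits," argue the adversary can always choose the opposite side, and bound the position gap at time $t^*$. I would also double-check the edge behavior as $v \to 1^-$ (ratio blows up, consistent with the problem becoming unsolvable at $v=1$) and confirm the $d$-independence, which is what makes the bound clean. The arithmetic simplifications I have sketched should be routine once the case analysis is set up.
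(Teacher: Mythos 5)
Your upper bound is correct and matches the paper's argument exactly: wrong-guess case, gap $\tfrac{2d}{1-v}$ at the turnaround, closing at relative speed $1-v$, total time $\tfrac{d}{1-v}+\tfrac{2d}{(1-v)^2}$, ratio $1+\tfrac{2}{1-v}$.

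The lower bound, however, has a genuine gap, and it is not merely a matter of rigor. You define $t^*$ as the first time the robot reaches $\pm d$ and let the adversary place the target on the opposite side. That adversary is valid, but it is too weak: the resulting bound is
\[
\inf_{t^* \ge d}\ \frac{t^* + \frac{2d+vt^*}{1-v}}{d/(1-v)} \;=\; \frac{d + \frac{(2+v)d}{1-v}}{d/(1-v)} \;=\; (1-v)+(2+v) \;=\; 3,
\]
not $1+\tfrac{2}{1-v}$; your final "simplification" $(1-v)+(2+v)=1+\tfrac{2}{1-v}$ is an arithmetic slip (the two agree only at $v=0$, and $1+\tfrac{2}{1-v}>3$ for $v>0$). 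The underlying reason the bound degrades is that reaching $\pm d$ is not a true commitment: a robot can touch $+d$ at time $d$ and turn around, and your accounting then only charges it the constant $3$. The paper's fix is to choose the commitment points $\pm\tfrac{d}{1-v}$ instead: since any catch time $T$ satisfies $T \ge d+vT$, i.e.\ $T \ge \tfrac{d}{1-v}$, the catch position has absolute value $d+vT \ge \tfrac{d}{1-v}$, so the robot \emph{must} visit $+\tfrac{d}{1-v}$ or $-\tfrac{d}{1-v}$ before catching either candidate target. Taking $t^*$ to be the first such visit (so $t^* \ge \tfrac{d}{1-v}$, and before $t^*$ neither target can have been caught, by the same distance argument) and placing the target on the opposite side gives total time at least $t^* + \tfrac{\frac{d}{1-v}+d+vt^*}{1-v} \ge \tfrac{d}{1-v}+\tfrac{2d}{(1-v)^2}$, hence $CR \ge 1+\tfrac{2}{1-v}$, matching your upper bound. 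With that substitution your adversary argument goes through essentially as you structured it.
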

\begin{proof}
  By Algorithm~\ref{alg:full_knowledge_away}, the robot goes in one direction for a time $\frac d{1-v}$.
  Observe that if the robot does not encounter the target after this amount of time, it must be on the opposite side of the origin (in the other direction).
  At the time the robot changes direction, its distance to the target will be equal to $\frac d{1-v} + d + \frac {dv}{1-v} = \frac{2d}{1-v}$. 
  Thus, the total time required until the robot catches up to the target is at most
  $$\frac{d}{1-v} +  \frac{2d}{(1-v)^2}.$$
  Clearly then, the competitive ratio is at most
  $$
    \frac{\frac{d}{1-v} +  \frac{2d}{(1-v)^2}}{\frac d{1-v}} = 1 + \frac{2}{1-v}
  $$
  which is as claimed in Equation~\eqref{eq:full_knowledge_away} above.

  Optimality follows from the fact that regardless of which direction the robot chooses to travel, the adversary can place the target in the opposite direction. 
  Moreover, for the robot to catch up to the target it must visit one of the points $\pm \frac d{1-v}$. 
  If the robot visits location $\frac d{1-v}$ to the right (resp. left) the adversary places the target on the left (resp. right). 
  Therefore the completion time will be at least $\frac d{1-v} + \frac{2d}{(1-v)^2} $. 
  This shows the upper bound is tight and completes the proof of Theorem~\ref{thm:full_knowledge_away}. 
\end{proof}

\subsection{The {\tt FullKnowledge/Toward} Model}
\label{sec:full_knowledge_toward}

Consider the following algorithm which is similar to Algorithm~\ref{alg:full_knowledge_away}.
\begin{algorithm}[H]
  \caption{Online Algorithm for the {\tt FullKnowledge/Toward} Model}\label{alg:full_knowledge_toward}
  \begin{algorithmic}[1]
    \State {{\bf input:} target speed $v$ and initial distance $d$}
    \State{choose any direction and go for time $\frac{d}{1+v}$}
    \If{target not found}
      \State{change direction and go until target is found}
    \EndIf
  \end{algorithmic}
\end{algorithm}

\begin{theorem}\label{thm:full_knowledge_toward_upper}
  For the {\tt FullKnowledge/Toward} model, Algorithm~\ref{alg:full_knowledge_toward} has competitive ratio at most
  \begin{equation}\label{eq:full_knowledge_toward_upper}
    1 + \frac{2}{1+v}.
  \end{equation}
\end{theorem}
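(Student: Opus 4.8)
The plan is to follow the template of the proof of Theorem~\ref{thm:full_knowledge_away}, adjusting for the fact that the target now approaches the origin, so the relevant distances shrink rather than grow. First I would pin down the denominator of the competitive ratio: if the robot knew the target's side, it would move straight at it, closing the initial gap of $d$ at combined rate $1+v$, so the offline optimal completion time is $\frac{d}{1+v}$. (It is worth noting in passing that the alternatives---waiting at the origin for the target, which takes $\frac{d}{v}$, or chasing in the target's direction of travel, which takes $\frac{d}{v-1}$ when $v>1$---are both slower, confirming $\frac{d}{1+v}$ is optimal offline.) This also fixes the Phase~1 budget in Algorithm~\ref{alg:full_knowledge_toward} as exactly the offline optimum.

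Next I would split into the two cases determined by whether the robot's initial choice of direction matches the target's side. If it does, the robot and target approach each other at rate $1+v$ over an initial gap of $d$, so they meet precisely at time $\frac{d}{1+v}$, which coincides with the Phase~1 budget; the algorithm then has competitive ratio $1$. If the robot guesses wrong, then at the end of Phase~1 it sits at distance $\frac{d}{1+v}$ from the origin on the empty side, while the target has moved to distance $d - v\cdot\frac{d}{1+v} = \frac{d}{1+v}$ from the origin on its own side, so the robot--target gap equals $\frac{2d}{1+v}$. When the robot turns around it heads toward the target's side while the target keeps moving toward (and eventually past) the origin, so they again close at rate $1+v$ and meet after an additional $\frac{2d}{(1+v)^2}$ time, for a total of $\frac{d}{1+v} + \frac{2d}{(1+v)^2}$. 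Dividing by the offline optimum $\frac{d}{1+v}$ gives exactly $1 + \frac{2}{1+v}$, and since this is the worse of the two cases it is the bound claimed in Equation~\eqref{eq:full_knowledge_toward_upper}.

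The work here is careful position bookkeeping rather than any deep obstacle; the two points I would take care to verify explicitly are (i) that in the wrong-guess case no collision occurs during Phase~1 --- with the target on, say, the right and the robot moving left, a meeting would require time $\frac{d}{v-1}$, which exceeds the budget $\frac{d}{1+v}$ for all $v>0$ --- and (ii) that the post-turnaround chase really is a head-on approach, i.e. the target is still on its original side at the turnaround moment (true since its position then is $\frac{d}{1+v}>0$), so the rate-$(1+v)$ closing argument applies for every $v>0$ without any case split on whether $v<1$. Because the statement only asserts an upper bound on this particular algorithm, no matching lower bound is needed for this theorem.
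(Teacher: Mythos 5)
Your proposal is correct and follows essentially the same route as the paper's proof: the same case split on whether the first direction is right or wrong, the same computation that at the turnaround both robot and target are at distance $\frac{d}{1+v}$ from the origin so the remaining gap $\frac{2d}{1+v}$ closes at rate $1+v$, and division by the offline optimum $\frac{d}{1+v}$. Your extra checks (no meeting during Phase~1 in the wrong-guess case, validity for all $v>0$) are sound additions but do not change the argument.
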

\begin{proof}
  The robot goes in one direction for a time $\frac d{1+v}$.
  If the robot finds the target in this time, the algorithm is clearly optimal.
  If, however, the robot does not find the target, then it must be on the opposite side of the origin (in the other direction).
  If this is the case, then by time $\frac{d}{1+v}$ the target has moved a distance $\frac{dv}{1+v}$ and is at distance $d - \frac{dv}{1+v} = \frac d{1+v}$ from the origin.
  Therefore at the time the robot changes direction, the distance between robot and target is $\frac {2d}{1+v}$.
  Thus, the robot will encounter the target in additional time $\frac {2d}{(1+v)^2}$. 
  It follows that the total time required for the robot to meet the target is $\frac d{1+v} + \frac {2d}{(1+v)^2}$ and the resulting competitive ratio satisfies
  $$
    CR \leq \frac{\frac d{1+v} + \frac {2d}{(1+v)^2}}{\frac d{v+1}} 
    = 1+ \frac 2{1+v}.
  $$
  This completes the proof of Theorem~\ref{thm:full_knowledge_toward_upper}. 
\end{proof}

\begin{theorem}\label{thm:full_knowledge_toward_lower}
  For the {\tt FullKnowledge/Toward} model, the competitive ratio of any online algorithm is at least $1 + \frac 2{1+v}$, provided that $v<1$.
  In particular, Algorithm~\ref{alg:full_knowledge_toward} is optimal for $v<1$.
\end{theorem}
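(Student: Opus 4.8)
The plan is to establish a matching lower bound over all deterministic online strategies, i.e.\ over all trajectories $f$ with $f(0)=0$ and $|f(t)-f(t')|\le |t-t'|$. I would first record the offline optimum: knowing that the target starts at distance $d$ and approaches the origin at speed $v$, an omniscient robot heads straight for it and meets it with combined speed $1+v$ after time $\frac{d}{1+v}$ (and since $v<1$, this happens before the target reaches the origin). A searching robot learns nothing before contact, so the adversary's only freedom is the side: it places the target either on the right, following $g^+(t)=d-vt$, or on the left, following $g^-(t)=-d+vt$, and the robot's completion time is the first time its trajectory meets the chosen path. Writing $t^+$ and $t^-$ for the first times with $f(t)=g^+(t)$ and $f(t)=g^-(t)$ respectively (either taken to be $+\infty$ if it never occurs), the worst case over the two placements is $\max(t^+,t^-)$, so it suffices to show $\max(t^+,t^-)\ge\frac{d}{1+v}+\frac{2d}{(1+v)^2}$; dividing by $\frac{d}{1+v}$ then yields the claimed ratio and, together with Theorem~\ref{thm:full_knowledge_toward_upper}, optimality of Algorithm~\ref{alg:full_knowledge_toward}.

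The argument itself would run as follows. By reflecting the line I may assume $t^+\le t^-$. A closing-speed estimate gives $t^+\ge\frac{d}{1+v}$: the function $f(t)-g^+(t)$ starts at $-d$ and has derivative at most $1+v$, so it cannot reach $0$ earlier. Next I would argue that at time $t^+$ the robot is at position $f(t^+)=d-vt^+\ge 0$ — equivalently $t^+\le d/v$. Indeed, if $f(t^+)<0$, then at that instant the robot lies strictly left of the origin while $g^-(t^+)=-d+vt^+$ lies strictly to its right, whereas at time $0$ the robot lies to the right of the left target; by the intermediate value theorem $f$ and $g^-$ would then already have met at some time before $t^+$, contradicting $t^-\ge t^+$.

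With this in hand, at time $t^+$ the distance between the robot (at $d-vt^+$) and the left target (at $-d+vt^+$) equals $2(d-vt^+)$, and the robot can close on $g^-$ at rate at most $1+v$, so $t^-\ge t^+ + \frac{2(d-vt^+)}{1+v}$. Viewed as a function of $t^+$, the right-hand side has derivative $\frac{1-v}{1+v}$, which is positive exactly because $v<1$; hence it is minimized at the smallest admissible value $t^+=\frac{d}{1+v}$, where it equals $\frac{d}{1+v}+\frac{2d}{(1+v)^2}$. This is the desired inequality.

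The main obstacle I anticipate is the middle step — certifying that the robot has not already met the ``other'' target by time $t^+$, and thereby pinning down the sign of $d-vt^+$ — since this is the one place where the global shape of the trajectory (and the fact that the target keeps moving past the origin) actually enters; by contrast the opening reduction and the final closing-speed estimate are routine. A secondary point worth care is the regime $v$ close to $1$, where the coefficient $\frac{1-v}{1+v}$ degenerates; but for $v<1$ it remains positive and the bound is exactly tight.
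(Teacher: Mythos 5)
Your proposal is correct and follows essentially the same argument as the paper: both identify the first moment the robot would intercept the target on one side (your $t^+$, the paper's point $\pm a$ at time $\frac{d-a}{v}$, related by $a=d-vt^+$), let the adversary place the target on the opposite side at gap $2a=2(d-vt^+)$, bound the remaining time by that gap over the closing speed $1+v$, and optimize using $t^+\ge\frac{d}{1+v}$ (equivalently $a\le\frac{d}{1+v}$). Your IVT step establishing $f(t^+)\ge 0$ is just a more explicit justification of what the paper asserts with ``clearly such a point must exist.''
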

\begin{proof}
  Consider any algorithm for a robot starting at the origin to meet a target initially placed at an unknown location distance $d$ from the origin.
  For any point at distance $a$ from the origin, the target takes exactly $\frac{d-a}{v}$ time to reach $a$.
  Then, let $t$ denote the time the robot first passes through a point at distance $a$ from the origin. 
  If $t < \frac{d-a}{v}$, then the robot cannot know whether the target is on the same or opposite side of the origin.
  On the other hand, if $t \geq \frac{d-a}{v}$ and it has not encountered the target, then the target must be on the opposite side of the origin.
  Thus, given a trajectory, let $\pm a$ be the first point such that the robot is at position $\pm a$ at time exactly $\frac{d-a}{v}$.
  Clearly such a point must exist for any trajectory since the target is moving toward the origin.
  Then whichever side of the origin the robot is on, consider the instance where the target started on the opposite side.
  Clearly then, the robot takes an additional time at least $\frac{2a}{1+v}$ to reach the target.
  Thus, the competitive ratio is given by:
  \begin{align}
    \frac{\frac{d-a}{v} + \frac{2a}{1+v} }{\frac d{1+v}} 
    &= \label{eq:full_knowledge_toward_lower}
    \frac{(d-a)(1+v) + 2av}{vd} =
    1 + \frac1v + \frac{a(v-1)}{dv} .
  \end{align}
  Observe that whenever $v<1$, the right-hand side of Equation~\eqref{eq:full_knowledge_toward_lower} satisfies
  $$
    1 + \frac1v + \frac{a(v-1)}{dv} \geq 
    1 + \frac1v + \frac{\frac d{1+v} (v-1)}{dv} =
    1 + \frac 2{1+v}
  $$
  which completes the proof of Theorem~\ref{thm:full_knowledge_toward_lower}.
\end{proof}

With Theorem~\ref{thm:full_knowledge_toward_lower} proved, we know Algorithm~\ref{alg:full_knowledge_toward} is optimal for any value of $v$ between $0$ and $1$, but what about when $v>1$? In this case, we'll prove the following algorithm is optimal: the robot waits at the origin forever. We call this algorithm ``the waiting algorithm''. 

\begin{theorem}\label{thm:waiting}
  Whenever the target is moving toward the origin with speed $v \geq 1$, the waiting algorithm has an optimal competitive ratio of $1 + \frac{1}{v}$.
\end{theorem}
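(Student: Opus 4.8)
The plan is to handle the upper and lower bounds separately. For the upper bound I would first observe that if the robot simply stays at the origin, then the target---starting at distance $d$ and moving toward the origin at constant speed $v$---reaches the origin at time exactly $d/v$, where the robot catches it; this is independent of which side the target started on and of $d$, so the waiting algorithm always completes at time $d/v$. Next I would record the offline optimum: knowing the target's side, the robot walks straight at it, the separation shrinks at rate $1+v$ and cannot shrink faster, so the best possible meeting time is $d/(1+v)$. Dividing yields $CR\le \frac{d/v}{d/(1+v)}=1+\frac1v$.

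For the lower bound, fix an arbitrary trajectory $f$ with $f(0)=0$ and $|f(t)-f(t')|\le|t-t'|$, and pit against it the two instances in which the target starts at $+d$ (position $d-vt$ at time $t$) or at $-d$ (position $vt-d$), in both cases moving toward the origin. Let $t_R$ and $t_L$ be the first times $f$ meets the right and the left target. The heart of the argument is the claim $\max(t_L,t_R)\ge d/v$; granting it, the adversary chooses the side with the larger meeting time, the robot cannot finish before $d/v$, and so $CR\ge \frac{d/v}{d/(1+v)}=1+\frac1v$, matching the upper bound and proving optimality.

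To establish the claim I would argue by contradiction: assume $t_R,t_L<d/v$ and, by the left/right symmetry, that $t_R\le t_L$. The useful structural fact is that both candidate trajectories pass through the origin at the common time $d/v$; concretely, at time $t_R$ the robot sits at $x_0:=d-vt_R>0$ while the left target is exactly at $-x_0$. Since $t_R\le t_L$, the robot has not yet met the left target, and the function $t\mapsto f(t)-(vt-d)$ equals $2x_0$ at time $t_R$ and has derivative at least $-(1+v)$; hence it stays positive until time $t_R+\frac{2x_0}{1+v}$, giving $t_L\ge t_R+\frac{2x_0}{1+v}$. Substituting $t_R=\frac{d-x_0}{v}$ and using $v\ge1$, equivalently $\frac{2}{1+v}\ge\frac1v$, yields $t_L\ge\frac{d-x_0}{v}+\frac{x_0}{v}=\frac dv$, contradicting $t_L<d/v$.

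I expect the lower bound to be the main obstacle, and within it the clean statement and proof of $\max(t_L,t_R)\ge d/v$: one must spot the simultaneous passage of both target paths through the origin and turn ``the robot cannot close on the receding left target faster than relative speed $1+v$'' into the derivative estimate above. The remaining arithmetic is routine, and it is precisely the step where $v\ge1$ enters; for $v<1$ the same computation instead shows waiting is beaten, consistent with Theorem~\ref{thm:full_knowledge_toward_lower} and Algorithm~\ref{alg:full_knowledge_toward}.
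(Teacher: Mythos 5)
Your proof is correct and follows essentially the same route as the paper: the upper bound is the same trivial computation, and the lower bound rests on the same key inequality $\frac{d-a}{v}+\frac{2a}{1+v}\ge\frac{d}{v}$ for $v\ge 1$, obtained by placing the target on the side opposite a point the robot reaches exactly when the same-side target would be there. The only difference is packaging: you run a two-instance contradiction with the critical point $a=d-vt_R$ supplied by the meeting with the right target, which incidentally sidesteps the explicit ``first such point exists'' step in the proof of Theorem~\ref{thm:full_knowledge_toward_lower} that the paper's proof of Theorem~\ref{thm:waiting} builds on.
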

\begin{proof}
  Clearly the algorithm takes exactly time $d/v$ to complete and so the upper bound follows trivially.
  For the lower bound, we build upon the proof of Theorem~\ref{thm:full_knowledge_toward_lower}.
  It simply remains to consider Equation~\eqref{eq:full_knowledge_toward_lower} for $v > 1$.
  In this case, the right-hand side of Equation~\eqref{eq:full_knowledge_toward_lower} is increasing with respect to $a \geq 0$, so
  $$
    1 + \frac1v + \frac{a(v-1)}{dv} \geq 1 + \frac 1v.
  $$
  This completes the proof of Theorem~\ref{thm:waiting}.
  
\end{proof}

\begin{remark}\label{rmk:waiting}
  Observe that the waiting algorithm makes no use of the target's speed or initial distance and therefore, as long as the target is moving toward the origin, applies directly to the other knowledge models.
\end{remark}

\section{The {\tt NoDistance} Model}\label{sec:no_dist}
In this section we assume that the robot knows $v$ but not $d$.
Consider the following zig-zag algorithm with ``expansion ratio'' $a > 0$ (with the value of $a$ to be determined).

\begin{algorithm}[H]
  \caption{Online Algorithm for {\tt NoDistance/Away} and {\tt NoDistance/Toward} Models}\label{alg:no_dist}
  \begin{algorithmic}[1]
    \State {\bf input}: target speed $v$ and expansion ratio $a$
    \State $i \gets 0$
    \While {target not found}
      \If {at origin}
        \State $d \gets (-a)^i$
        \State $i \gets i + 1$
      \ElsIf {at $d$}
        \State $d \gets 0$
      \EndIf
      \State {move toward $d$}
    \EndWhile
  \end{algorithmic}
\end{algorithm}

\subsection{The {\tt NoDistance/Away} Model}\label{sec:no_dist_away}
The following result was shown by Alpern and Gal~\cite{alpern2003theory}.
\begin{theorem}\label{thm:no_dist_away_upper}
  For the {\tt NoDistance/Away} model, Algorithm~\ref{alg:no_dist} with $a = 2 \frac{1+v}{1-v}$ has an optimal competitive ratio of
  \begin{align*} 
      1+ 8\frac{1+v}{(1-v)^2}.
  \end{align*}
\end{theorem}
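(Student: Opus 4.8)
The plan is to prove the stated value from both sides: an upper bound of $1+8\frac{1+v}{(1-v)^2}$ on the competitive ratio of Algorithm~\ref{alg:no_dist} when $a=2\frac{1+v}{1-v}$, and a matching lower bound valid for every online strategy.

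\textbf{Upper bound.} First I would describe the trajectory in terms of \emph{excursions}: for $i=0,1,2,\dots$, excursion $i$ is the round trip from the origin out to the point $(-a)^i$ and back, which (since $a>1$) has length $2a^i$, so the robot is back at the origin ready to begin excursion $n$ at time $T_n=\sum_{i=0}^{n-1}2a^i=\frac{2(a^n-1)}{a-1}$. By symmetry assume the target starts on the positive side at unknown distance $d\ge 1$, so only the even excursions can meet it. During the outbound leg of excursion $i$ the robot runs from $0$ to $a^i$ while the target sits at $d+vt$; solving $s=d+v(T_i+s)$ shows the robot overtakes the target on excursion $i$ exactly when $\frac{d+vT_i}{1-v}\le a^i$, and otherwise the target is already strictly to the right of $a^i$ when the robot turns, so no meeting can occur on that excursion (on the return leg the two only separate). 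Since $a$ was chosen so that $\frac{2v}{(1-v)(a-1)}=\frac{2v}{1+3v}<1$, the quantity $\frac{d+vT_i}{1-v}-a^i$ tends to $-\infty$, so there is a smallest (even) index $n$ with $\frac{d+vT_n}{1-v}\le a^n$, and this is the excursion on which the target is first caught, at time $C=T_n+\frac{d+vT_n}{1-v}=\frac{T_n+d}{1-v}$. An offline robot that knows the target's starting position needs exactly $\frac d{1-v}$, so this instance contributes competitive ratio
\[
  \frac{C}{d/(1-v)}=\frac{T_n+d}{d}=1+\frac{T_n}{d}.
\]

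\textbf{Bounding $T_n/d$.} Whenever $n\ge 2$, minimality of $n$ gives $\frac{d+vT_{n-2}}{1-v}>a^{n-2}$, i.e.\ $d>a^{n-2}(1-v)-vT_{n-2}$; using $T_{n-2}<\frac{2a^{n-2}}{a-1}$ this yields $d>a^{n-2}\cdot\frac{(1-v)(a-1)-2v}{a-1}$. Substituting $a=2\frac{1+v}{1-v}$ gives $a-1=\frac{1+3v}{1-v}$, hence $(1-v)(a-1)-2v=1+v$, so $d>a^{n-2}\frac{1+v}{a-1}$. Combined with $T_n<\frac{2a^n}{a-1}$ this gives $\frac{T_n}{d}<\frac{2a^n}{a^{n-2}(1+v)}=\frac{2a^2}{1+v}=\frac{8(1+v)}{(1-v)^2}$, the last step because $a^2=4\frac{(1+v)^2}{(1-v)^2}$. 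The only remaining case is $n=1$ (which can occur only for a target on the negative side), where $T_1=2$ and $d\ge 1$ give ratio $\le 3\le 1+\frac{8(1+v)}{(1-v)^2}$. Hence every instance yields ratio at most $1+\frac{8(1+v)}{(1-v)^2}$, and taking $n\to\infty$ shows this value is approached, so the algorithm's competitive ratio equals $1+\frac{8(1+v)}{(1-v)^2}$.

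\textbf{Lower bound (optimality).} Here I would argue that no online algorithm can beat this value. Because a target at arbitrarily large distance moving away can be caught only after the robot has travelled arbitrarily far on that side, any trajectory determines an increasing sequence of turning points $x_1<x_2<\cdots$ reached alternately on the two sides. Placing the target just beyond $x_{k-2}$ on the relevant side forces the robot out to (at least) $x_k$ before the catch, and the same accounting as in the upper bound shows that the worst case over $k$ is at least the infimum, over all admissible turning-point sequences, of expressions of the form $1+\frac{2x_k/x_{k-2}\cdot x_k/x_{k-2}}{(1-v)(x_k/x_{k-2}-1)-2v}$. By the classical Gal argument (geometric sequences are extremal for this min--max), the infimum is attained at $x_k=a^k$ with $a=2\frac{1+v}{1-v}$, giving the matching lower bound $1+8\frac{1+v}{(1-v)^2}$; this is exactly the Alpern--Gal analysis, which I would cite here rather than reproduce.

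\textbf{Main obstacle.} The upper bound is a routine computation, and its only delicate point is verifying that the first meeting really occurs on the outbound leg of excursion $n$ and not earlier — in particular never on a return leg — which relies on the observation that once the target lies strictly beyond a turning point the robot cannot overtake it while heading back. The genuinely hard part is the lower bound: converting the informal ``the robot must keep visiting ever-larger turning points'' statement into a clean min--max over sequences and invoking Gal's theorem to pin down the geometric optimum; since this is the established Alpern--Gal result, the cleanest route is to reduce to it.
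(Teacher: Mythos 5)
The paper offers no proof of this theorem at all: it is stated as a known result and attributed entirely to Alpern and Gal~\cite{alpern2003theory}, so there is no internal argument to compare yours against line by line. Your upper-bound analysis is correct and self-contained: the excursion bookkeeping ($T_n=\frac{2(a^n-1)}{a-1}$, catch time $\frac{T_n+d}{1-v}$, ratio $1+T_n/d$), the ``just missed at excursion $n-2$'' inequality $d>a^{n-2}\,\frac{(1-v)(a-1)-2v}{a-1}$, and the evaluation $(1-v)(a-1)-2v=1+v$ at $a=2\frac{1+v}{1-v}$ all check out, including the boundary cases $n=1$ and $n=2$; this closely parallels the technique the paper itself uses for the Toward variant (Theorem~\ref{thm:no_dist_toward_upper}), so in effect you supply the verification that the paper compresses into a citation. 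For the lower bound you, like the paper, defer to Alpern--Gal, which is legitimate; the one blemish is that the min--max functional you display as motivation, $1+\frac{2(x_k/x_{k-2})^2}{(1-v)(x_k/x_{k-2}-1)-2v}$, does not evaluate to $1+8\frac{1+v}{(1-v)^2}$ at the geometric sequence $x_k=a^k$ (with the two-step ratio $a^2$ plugged in it gives a different constant); the functional consistent with your own upper-bound computation is $1+\frac{2a^2}{(1-v)(a-1)-2v}$ in the one-step growth ratio $a$, which is indeed minimized at $a=2\frac{1+v}{1-v}$ with value $1+8\frac{1+v}{(1-v)^2}$. Since that part is explicitly delegated to the citation, this is a cosmetic slip rather than a gap. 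Net comparison: your route proves more within the text (a complete upper bound, matching the style of the paper's Toward analysis), while the paper's citation-only route is shorter and also covers the optimality claim, which you only sketch.
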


\subsection{The \tt {NoDistance/Toward} Model}
\label{sec:no_dist_toward}

Recall first the statement made in Remark~\ref{rmk:waiting}, that the optimality of the waiting algorithm (which makes no use of any knowledge of $d$) holds for any $d$ as long as $v \geq 1$.
Thus, we need only consider scenarios where $0 \leq v < 1$.
As we will see, however, when the target is moving toward the origin, the waiting algorithm is optimal for far slower targets!
In general, since the target is moving toward the origin, the robot need not search ever-increasing distances away from the origin (i.e. execute Algorithm~\ref{alg:no_dist} with an expansion ratio $a > 1$).
We call any algorithm which involves the robot {\em never} traveling further than some finite distance from the origin (in one or both directions) a {\em contracting algorithm}. 
Note that Algorithm~\ref{alg:no_dist} for $0 < a \leq 1$ is a contracting algorithm and $a=0$ is exactly the waiting algorithm.
We'll start by showing that any contracting algorithm cannot have a better competitive ratio than the waiting algorithm:

\begin{theorem}\label{thm:contracting}
  The competitive ratio of Algorithm~\ref{alg:no_dist} for any $0 \leq a \leq 1$ is $1 + \frac{1}{v}$.
\end{theorem}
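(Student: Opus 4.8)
The plan is to show two bounds. First, the upper bound: for any expansion ratio $0 \le a \le 1$, Algorithm~\ref{alg:no_dist} catches the target in time at most $(1+\tfrac1v)\cdot\tfrac{d}{1+v}$, matching the waiting algorithm. The key observation is that since $0\le a\le1$, the robot never travels farther than distance $1$ from the origin in either direction, while the target (moving toward the origin at speed $v<1$) reaches the origin at time $d/v \ge 1/v > 1$. In fact, by the same reasoning as in Theorem~\ref{thm:full_knowledge_toward_lower}, the target crosses \emph{every} point at distance $a' \le d$ from the origin; I would argue that the robot's zig-zag trajectory, confined to $[-1,1]$, is eventually ``caught'' by the target — concretely, the target must pass through the origin, and since the robot returns to the origin infinitely often (at the start of each phase), the robot and target are co-located no later than when the target reaches the origin, i.e. by time $d/v$. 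So the completion time is at most $d/v$, giving $CR \le \tfrac{d/v}{d/(1+v)} = 1 + \tfrac1v$.

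Second, the lower bound: I would reuse the adversarial argument from the proof of Theorem~\ref{thm:full_knowledge_toward_lower}. Since Algorithm~\ref{alg:no_dist} with $0\le a\le1$ is a contracting algorithm, the robot stays within a bounded interval; the target moving toward the origin eventually passes every point the robot can reach. As in that earlier proof, let $\pm a'$ be the first point at which the robot sits at position $\pm a'$ at exactly the time $\tfrac{d-a'}{v}$ the target would arrive there from the far side; such a point exists because the target sweeps through the robot's bounded range. The adversary places the target on the opposite side, forcing an extra $\tfrac{2a'}{1+v}$ of travel, and Equation~\eqref{eq:full_knowledge_toward_lower} gives completion time $\tfrac{d-a'}{v} + \tfrac{2a'}{1+v}$ with ratio $1 + \tfrac1v + \tfrac{a'(v-1)}{dv} \ge 1 + \tfrac1v$ whenever $v \le 1$; but here equality is not automatic, so I would instead argue directly that the robot cannot beat $d/v$: whatever finite interval $[-R,R]$ contains the robot's trajectory, the adversary picks $d$ large enough (or uses that $d$ is unknown) so that the target, after passing the origin, is still being chased, and the robot gains essentially nothing — formally, the robot meets the target no earlier than when the target has reached the robot's closest approachable point on the correct side, which the adversary can push to be arbitrarily late relative to $d/(1+v)$, yielding $CR \ge 1 + \tfrac1v$ in the supremum over initial positions and (for the {\tt NoDistance} model) distances consistent with the known $v$.

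The main obstacle I anticipate is making the lower bound fully rigorous in the {\tt NoDistance} setting: here $d$ is unknown, so I must be careful about what the adversary controls. The clean argument is that for a contracting algorithm with trajectory confined to $[-R, R]$, the robot's behavior is fixed independent of $d$; the adversary chooses the initial position (left or right, at distance $d$) and exploits that the robot, never leaving $[-R,R]$, must wait for the target to enter that interval. Choosing the side opposite to wherever the robot first reaches the ``critical'' point $a'$, and letting $d \to \infty$, the term $\tfrac{a'(v-1)}{dv} \to 0$, so the ratio $1 + \tfrac1v + \tfrac{a'(v-1)}{dv}$ tends to $1 + \tfrac1v$ from below — meaning the \emph{supremum} of the competitive ratio is exactly $1 + \tfrac1v$, matching the waiting algorithm and the upper bound. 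Combining the two bounds gives $CR = 1 + \tfrac1v$ as stated.
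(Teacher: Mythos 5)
Your upper-bound half contains a genuine gap, and in fact that half cannot be repaired. The step ``the robot returns to the origin infinitely often, hence robot and target are co-located no later than when the target reaches the origin, i.e.\ by time $d/v$'' is a non sequitur: the target occupies the origin only at the single instant $d/v$, the robot need not be there at that instant, and once the target has crossed the origin the robot may be moving \emph{ahead of it} at speed $1>v$, so the meeting can occur well after $d/v$. Concretely, take $a=1$, $v=0.9$, $d=1.91$, target on the right, robot's first excursion to the right: the robot reaches $+1$ at time $1$ while the target is at $1.01$ (a near miss); during $[1,3]$ both move left and the gap grows; the meeting happens only on the robot's return sweep, at time $(d+4)/(1+v)\approx 3.11$, while the offline optimum is $d/(1+v)\approx 1.005$. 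The ratio for this instance is $(d+4)/d\approx 3.09$, strictly larger than $1+\frac{1}{v}\approx 2.11$. So the completion time is not bounded by $d/v$, and the competitive ratio of Algorithm~\ref{alg:no_dist} with $0\le a\le 1$ is in general strictly greater than $1+\frac{1}{v}$: the equality you are trying to prove does not hold as an upper bound, and only the lower-bound direction is available (which is also the only direction used afterwards, to conclude that no contracting algorithm can beat the waiting algorithm).

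For comparison, the paper's own proof establishes exactly and only that lower bound, by a more direct argument than yours: let $d'$ bound the robot's excursion in some direction, place the target at distance $cd'$ on that side; the robot cannot meet it before the target comes within $d'$ of the origin, so the completion time is at least $(c-1)d'/v$ against an optimum of $cd'/(1+v)$, and letting $c\to\infty$ gives $CR\ge 1+\frac{1}{v}$. Your second half, once you drop the detour through the critical point $a'$ borrowed from Theorem~\ref{thm:full_knowledge_toward_lower} (where, as you noticed, the correction term $\frac{a'(v-1)}{dv}$ has the wrong sign to give the bound directly, though it does vanish as $d\to\infty$ since $a'\le 1$), is essentially this same ``bounded trajectory plus far-away target, $d\to\infty$'' argument and is sound; but on its own it yields only $CR\ge 1+\frac{1}{v}$, not the claimed equality.
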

\begin{proof}
  Let $d^\prime$ be the finite distance further than which the robot will never travel in at least one direction.
  Then consider the scenario where the target is initially a distance $d = c \cdot d^\prime >> d^\prime$ from the origin in the same direction.
  Then the competitive ratio is at least 
  \begin{align*}
    \sup_{c} \frac{\frac{cd^\prime - d^\prime}{v}}{\frac{c d^\prime}{v+1}} &= \sup_{c} \frac{c-1}{c} \frac{1+v}{v}
    = \lim_{c \rightarrow \infty} \frac{c-1}{c} \left(1 + \frac{1}{v} \right)
    = 1 + \frac{1}{v}
  \end{align*}
  which proves Theorem~\ref{thm:contracting}.
\end{proof}

By Theorem~\ref{thm:contracting}, any algorithm which hopes to out-perform the waiting algorithm must be expanding.
Now we show that the following hybrid algorithm, Algorithm~\ref{alg:no_dist_toward}, is optimal.

\begin{algorithm}[H]
  \caption{Wait or Zig-Zag Search Algorithm for {\tt NoDistance}/{\tt Toward} model}\label{alg:no_dist_toward}
  \begin{algorithmic}[1]
    \State {{\bf input:} target speed $v$}
    \If{$v \geq \frac{1}{3}$}
      \State execute waiting algorithm
    \Else
      \State execute Algorithm~\ref{alg:no_dist} with $a=2\frac{1-v}{1+v}$
    \EndIf
  \end{algorithmic}
\end{algorithm}

\begin{theorem}\label{thm:no_dist_toward_upper}
  For the {\tt NoDistance/Toward} model, the competitive ratio of Algorithm~\ref{alg:no_dist_toward} is at most 
  \begin{align}
    \begin{cases} 
      1 + \frac{1}{v} & \mbox{if $v \geq \frac{1}{3}$} \\
      1 + 8\frac{1-v}{(1+v)^2} & \mbox{if $v < \frac{1}{3}$}
    \end{cases}
  \end{align}
\end{theorem}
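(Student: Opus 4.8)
The plan is to handle the two branches of Algorithm~\ref{alg:no_dist_toward} separately, the first being essentially trivial. For $v \ge \frac13$ the algorithm runs the waiting algorithm, so the robot meets the target exactly when the target reaches the origin, namely at time $\frac d v$; since an offline robot that knows the target's initial position moves straight at it and meets it at time $\frac d{1+v}$, the competitive ratio is exactly $\frac{d/v}{d/(1+v)} = 1+\frac1v$. This needs no case analysis and also matches the bound guaranteed by Remark~\ref{rmk:waiting}.

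For $v<\frac13$ the algorithm runs Algorithm~\ref{alg:no_dist} with $a = 2\frac{1-v}{1+v}$, and one first notes that $a>1$ precisely because $v<\frac13$, so this is a genuinely expanding zig-zag. Assume without loss of generality that the target starts at distance $d\ge1$ to the right of the origin. The even-indexed excursions $0,2,4,\dots$ go right, reaching the points $1,a^2,a^4,\dots$, and excursion $k$ leaves the origin at time $\frac{2(a^{k}-1)}{a-1}$; the odd-indexed excursions go left. The substance of the proof is to show the target is always caught on the outgoing leg of some rightward excursion, and before it reaches the origin. First I would observe that on a rightward excursion reaching $a^k$, if the target is still strictly to the right of $a^k$ at the turning point then it stays to the right of the robot for the entire return leg (the gap only grows, since the robot retreats at unit speed and the target advances at speed $v<1$), so it is not caught during that excursion and in particular remains to the right of the origin. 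Iterating this over all rightward excursions preceding the first one, say excursion $n$, that catches the target, one gets that $d$ exceeds the target's position at the instant the robot first reaches $a^{n-2}$, i.e.
$$ d \;>\; a^{n-2}(1+v) + \frac{2v\,(a^{n-2}-1)}{a-1}, $$
the constraints from earlier rightward excursions being weaker and implied by this one. I would then check that this lower bound forces $d > v\cdot\frac{2(a^{n}-1)}{a-1}$, which is exactly the inequality $\frac{d+t_0}{1+v}<\frac d v$ (with $t_0$ the start time of excursion $n$) certifying that the target has not yet reached the origin at the catch time; this small computation reduces, after using $(1+v)(a-1)+2v = 1-v$ and $a^2(1+v)^2 = 4(1-v)^2$, to $(3v-1)^2\ge 0$.

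Once the catch is known to happen on the outgoing leg of rightward excursion $n$, the robot leaves the origin at time $t_0 = \frac{2(a^n-1)}{a-1}$ and meets the target at time $T=\frac{d+t_0}{1+v}$; against the offline optimum $\frac d{1+v}$ this yields competitive ratio $1+\frac{t_0}{d}$. For $n=0$ one has $d\le 1+v$ and $t_0=0$, so the ratio is $1$. For $n\ge2$, substituting the lower bound on $d$ gives
$$ 1+\frac{t_0}{d} \;<\; 1 + \frac{\dfrac{2(a^{n}-1)}{a-1}}{\,a^{n-2}(1+v) + \dfrac{2v(a^{n-2}-1)}{a-1}\,}, $$
and the final step is to verify the right-hand side is at most $1+8\frac{1-v}{(1+v)^2}$ for every $n$. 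Using the same two identities for $a=2\frac{1-v}{1+v}$, clearing denominators makes the $a^{n-2}$ terms cancel exactly, and the inequality again collapses to $(3v-1)^2\ge0$, which finishes the proof (with equality at $v=\frac13$, so the two branches agree at the threshold).

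I expect the main obstacle to be the bookkeeping in the middle step: carefully ruling out that the target is caught on a return leg or on a leftward excursion, and that it slips past the origin before the robot's expanding excursions reach it. The closing algebra looks heavy but becomes routine once the identities $(1+v)(a-1)+2v=1-v$ and $a^2(1+v)^2=4(1-v)^2$ are isolated.
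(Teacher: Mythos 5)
Your proposal is correct, and although it hinges on the same central fact as the paper's proof --- the ``just missed two rounds earlier'' inequality $d > a^{n-2}(1+v) + \frac{2v(a^{n-2}-1)}{a-1}$ --- the execution is genuinely different. The paper upper-bounds the completion time by the just-miss configuration, divides by the offline time, lets $d \rightarrow \infty$, and only then minimizes the generic-$a$ expression $1+\frac{2a^2}{(a-1)+v(a+1)}$ at $a=2\frac{1-v}{1+v}$; you instead fix that $a$ from the start, compute the catch time exactly as $\frac{d+t_0}{1+v}$ so that $CR = 1+\frac{t_0}{d}$, and substitute the lower bound on $d$, after which both of your algebraic claims check out: clearing the $(a-1)$ denominators and using $(1+v)(a-1)=1-3v$ and $a(1+v)=2(1-v)$, the no-crossing condition becomes $1-v \geq 2va^2$, and in the final bound the $a^{n}$ terms cancel leaving $2va^2 \leq 1-v$, both equivalent to $(1-3v)^2 \geq 0$, exactly as you predicted. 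What your plan buys over the paper is that it makes explicit something the paper assumes silently: that the target is necessarily caught on the outgoing leg of a same-side excursion before it can slip past the origin. Your ingredients (the gap grows on return legs; the robot occupies non-positive positions during opposite-side excursions so no meeting can occur while the target is still positive; and $d > v\cdot\frac{2(a^n-1)}{a-1}$ at the start of the next same-side excursion) do close this by induction over escaped rounds, and also guarantee that the first catching same-side excursion exists, which is the point you rightly flag as the main bookkeeping burden. Two small things to tidy in a full write-up: the ``without loss of generality the target starts to the right'' is not literally a symmetry, since the algorithm's first excursion direction is fixed; but your inequalities depend only on $n$ and not on its parity, so the opposite-side case is identical apart from the extra trivial base case $n=1$, where $CR \leq 1+\frac{2}{d} \leq 3 \leq 1+8\frac{1-v}{(1+v)^2}$ for $v \leq \frac{1}{3}$. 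Also, your offline benchmark $\frac{d}{1+v}$ is the correct one for the {\tt Toward} case (and so your per-round bound needs no limit over $d$, unlike the paper's argument).
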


\begin{proof}
  The first case is trivial: the competitive ratio of the waiting algorithm is exactly $1 + \frac{1}{v}$ by Theorem~\ref{thm:waiting}.
  The second case, however, is a bit more complicated.
  First, observe that if $v < \frac{1}{3}$ then $a$ must be less than $3$.
  Indeed, consider the scenario where the robot ``just misses'' the target on the very first round of the algorithm (after traveling a distance $1$ in some direction and then turning around).
  Then the competitive ratio of the algorithm is 
  \begin{align*}
    1+\frac{2a+2}{1+v}
  \end{align*}
  which is greater than $1+8\frac{1-v}{(1+v)^2}$ for any $a > 3$ and $v > 0$:
  \begin{align*}
    1+\frac{2a+2}{1+v} &> 1 + \frac{8}{1+v} > 1 + \frac{8}{1+v} \cdot \frac{1-v}{1+v}
  \end{align*}
  since $\frac{1-v}{1+v} < 1$.

  Now, consider the round $k$ when the robot catches up to the target and observe that
  \begin{align*}
    a^{k-2} &< d - \left(2 \sum_{i=0}^{k-3} a^i + a^{k-2}\right) v 
    = d - \left( 2 \frac{a^{k-2} - 1}{a-1} + a^{k-2} \right) v
  \end{align*}
  since otherwise, the robot would have caught up to the target in round $k-2$.
  This yields the following inequality which will prove useful in analyzing the competitive ratio below:
  \begin{align}
    a^{k-2} &< d - \left( 2 \frac{a^{k-2}}{a-1} + a^{k-2} \right) v \nonumber \\
    &\leq d \frac{a-1}{a-1+v(a+1)} + \frac{v}{a-1+v(a+1)} \nonumber \\
    &\leq d \frac{a-1}{a-1+v(a+1)} + \frac{1}{4a-2} \label{eq:no_dist_toward_upper:just_missed_bound}
  \end{align}
  
  Observe the worst competitive ratio, then, is given by the situation where the robot ``just misses'' the target on the $(k-2)^\text{th}$ round and catches up to it only on round $k$.
  It follows the competitive ratio of Algorithm~\ref{alg:no_dist_toward} is
  \begin{align*}
    \frac{
      2 \sum_{i=0}^{k-3} a^i + a^{k-2} + \frac{2(a^{k-2} + a^{k-1})}{1+v}
    }{
      \frac{d}{1-v}
    } \leq \frac{
      2 \frac{a^{k-2}-1}{a-1} + a^{k-2} + \frac{2(a^{k-2} + a^{k-1})}{1+v}
    }{
      \frac{d}{1-v}
    }
  \end{align*} 
  which, by Inequality~\eqref{eq:no_dist_toward_upper:just_missed_bound} (and by substituting each $a^{k-2}$ with the right-hand side of Inequality~\eqref{eq:no_dist_toward_upper:just_missed_bound}), is less than or equal to
  \begin{align}
    CR &\leq 1 + \frac{1}{2} \left[ \frac{1}{d} \left( \frac{a-3}{a-1} \cdot \frac{v(5-7a)}{1-3a+2a^2} \right) + \frac{4a^2}{(a-1)+v(a+1)} \right] \label{eq:no_dist_toward_upper:upper_bound_1} \\
    &\leq \lim_{d \rightarrow \infty} 1 + \frac{1}{2} \left[ \frac{1}{d} \left( \frac{a-3}{a-1} \cdot \frac{v(5-7a)}{1-3a+2a^2} \right) + \frac{4a^2}{(a-1)+v(a+1)} \right] \nonumber \\
    &= 1 + \frac{2a^2}{(a-1)+v(a+1)} \label{eq:no_dist_toward_upper:upper_bound_2}
  \end{align}
  which follows since the right-hand side of Inequality~\eqref{eq:no_dist_toward_upper:upper_bound_1} is increasing with respect to $d$ on $1 < a \leq 3$.
  Finally, the right-hand side of Inequality~\eqref{eq:no_dist_toward_upper:upper_bound_2} is minimized at $a=2 \frac{v-1}{v+1}$ with a value of $1+8\frac{1-v}{(1+v)^2}$, which proves Theorem~\ref{thm:no_dist_toward_upper}.
\end{proof}

Now we show that Algorithm~\ref{alg:no_dist_toward} is optimal by proving a tight lower bound on the competitive ratio for any online algorithm.
Our proof is based on techniques developed in~\cite{killick2022cone}.
Let $X(t)$ denote be the robot's position at time $t$ according to a given strategy.

\begin{theorem}\label{thm:no_dist_toward_lower}
    For the {\tt NoDistance/Toward} model, any strategy $X$ has a competitive ratio of at least
    \begin{align*}
        \begin{cases}
            1+\frac{1}{v} & \mbox{if  $v \geq \frac{1}{3}$} \\
            1 + 8\frac{1-v}{(1+v)^2} & \mbox{otherwise}
        \end{cases}
    \end{align*}
\end{theorem}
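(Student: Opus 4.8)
The plan is to prove the matching lower bound via a first-passage / ``cone'' argument in spacetime, built to mirror the upper bound of Theorem~\ref{thm:no_dist_toward_upper}. Write $g(t) = X(t)+vt$ and $h(t) = vt-X(t)$. A target that starts at distance $\delta$ to the right of the origin and moves toward it is at position $\delta-vt$ at time $t$, so the robot catches it exactly at the first-passage time $\tau_R(\delta):=\inf\{t:g(t)=\delta\}$; symmetrically, it catches the left-target at distance $\delta$ at $\tau_L(\delta):=\inf\{t:h(t)=\delta\}$. A robot told the target's position would intercept it head-on at relative speed $1+v$, so the offline optimum for either target is $\delta/(1+v)$; hence $CR_X = (1+v)\sup_{\delta\ge1}\max\{\tau_R(\delta),\tau_L(\delta)\}/\delta$. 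It therefore suffices to assume $CR_X=\rho$ for some $\rho$ strictly below the claimed bound and derive a contradiction. Writing $G(t)=\max_{s\le t}g(s)$ and $H(t)=\max_{s\le t}h(s)$ for the robot's ``effective reach'' to the right and left, the hypothesis $\tau_R(\delta),\tau_L(\delta)\le\frac{\rho}{1+v}\delta$ for all $\delta\ge1$ is equivalent to $G(t)\ge\frac{1+v}{\rho}t$ and $H(t)\ge\frac{1+v}{\rho}t$ for all $t$ beyond a fixed threshold.

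I would then split along the same line as Algorithm~\ref{alg:no_dist_toward}. Call $X$ \emph{contracting} if $\sup_t X(t)<\infty$ or $\inf_t X(t)>-\infty$, and \emph{expanding} otherwise. If, say, $X(t)\le d'$ for all $t$, then placing the target at initial distance $cd'\gg d'$ to the right forces the robot to wait until the target has drifted inward to position $d'$, i.e.\ until time at least $(cd'-d')/v$, so $CR_X\ge\frac{c-1}{c}\cdot\frac{1+v}{v}\to 1+\frac1v$ as $c\to\infty$; this is the argument behind Theorem~\ref{thm:contracting}, and $1+\frac1v$ is exactly the claimed bound when $v\ge\frac13$ and is no smaller than it when $v<\frac13$ (since $(3v-1)^2\ge0$). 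So we may assume $X$ is expanding in both directions. From such a trajectory I would extract an alternating sequence of record events --- effective reaches $\delta_1\le\delta_2\le\cdots$ attained, alternately on the two sides, at times $\theta_1\le\theta_2\le\cdots$, where $\theta_k$ is the first time the effective reach on the round-$k$ side equals $\delta_k$ --- and have the adversary reveal, at time $\theta_k$, a target just beyond $\delta_k$ on that side. Since the robot is then moving the wrong way and its round-$(k{+}1)$ excursion heads to the \emph{other} side, it cannot catch this target until its effective reach on the round-$k$ side next surpasses $\delta_k$, which occurs during round $k{+}2$; accounting for the time already spent on earlier rounds (controlled by the $\delta_i$ through the unit-speed constraint, exactly as in the proof of Theorem~\ref{thm:no_dist_toward_upper}) and for the target's continued drift, the ratio of catch time to the offline optimum $\delta_k/(1+v)$ tends, as $k\to\infty$, to $1+\frac{2a^2}{(a-1)+v(a+1)}$ --- the very expression in Equation~\eqref{eq:no_dist_toward_upper:upper_bound_2} --- where $a$ is the effective expansion ratio of the trajectory (the limiting ratio of successive record reaches). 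An expanding trajectory has $a\ge1$, so $CR_X\ge\inf_{a\ge1}\bigl(1+\frac{2a^2}{(a-1)+v(a+1)}\bigr)$; the interior minimizer $a=2\frac{1-v}{1+v}$ is admissible precisely when $v<\frac13$, giving $1+8\frac{1-v}{(1+v)^2}$, while for $v\ge\frac13$ the minimum over $a\ge1$ is the endpoint limit $a\to1$ with value $1+\frac1v$, matching the contracting case and completing the dichotomy.

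The main obstacle is carrying out the record-extraction and the travel-time recursion for an \emph{arbitrary} trajectory --- one that may pause, reverse without setting a new record, or not alternate strictly --- rather than for an idealized zig-zag, and in particular identifying the effective expansion ratio and verifying $a\ge1$ in that generality. Unlike the classical static-target lower bound, here the revealed target keeps moving while the robot performs its next two excursions, so both the ``just-missed'' distance and the eventual catch position depend on the elapsed time; tracking these dependencies with the right constants is what pins down the threshold $v=\frac13$ and the constant $8$, and is the analytic heart of the argument. The clean way to handle general trajectories is the cone viewpoint of~\cite{killick2022cone}: at every time $t$ the uncaught right-targets occupy a fixed spacetime cone determined by $G(t)$, and likewise on the left by $H(t)$, and the adversary always keeps the true target on the boundary of whichever cone was ``cheapest'' for the robot to service; the recursion above is then merely the bookkeeping of how fast these two cone boundaries can be advanced subject to the speed-$1$ bound.
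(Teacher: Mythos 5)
Your setup (the first--passage reformulation via $g(t)=X(t)+vt$, $h(t)=vt-X(t)$, the offline optimum $\delta/(1+v)$, and the contracting/expanding dichotomy with the contracting case disposed of exactly as in Theorem~\ref{thm:contracting}, plus the observation $1+\tfrac1v\ge 1+8\tfrac{1-v}{(1+v)^2}$) is sound, and the high-level idea --- an adversary who places the target just beyond the robot's effective reach on the opposite side, followed by an optimization over a single ``expansion'' parameter with threshold $v=\tfrac13$ --- is the right one and is indeed the spirit of the paper's cone argument from~\cite{killick2022cone}. However, the expanding case, which is the entire content of the theorem, is not actually proved. Your plan extracts ``record events'' $\delta_1\le\delta_2\le\cdots$ with an alternating round structure, assumes the existence of a \emph{limiting ratio} $a$ of successive record reaches, and asserts without derivation that the catch-time-to-optimum ratio tends to $1+\frac{2a^2}{(a-1)+v(a+1)}$, i.e.\ the expression in~\eqref{eq:no_dist_toward_upper:upper_bound_2}. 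None of these steps is available for an arbitrary strategy: a general trajectory need not alternate in rounds, successive records need not have a limiting ratio (and when they do not, it is unclear what $a$ should be or why $a\ge1$), and the recursion behind~\eqref{eq:no_dist_toward_upper:upper_bound_2} was derived in Theorem~\ref{thm:no_dist_toward_upper} only for the idealized geometric zig-zag, so invoking it as a lower bound for all trajectories is circular. You acknowledge this yourself (``the main obstacle\dots the analytic heart of the argument'') and defer to the cone viewpoint, but deferring it leaves the theorem unproved: the reduction of an arbitrary trajectory to a one-parameter family is precisely what has to be established.

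For comparison, the paper closes exactly this gap with a different, cleaner parameterization that avoids rounds and records altogether: it sets $\beta_t=\inf_{t'>t}\frac{t'}{|X(t')|}$, which is monotone in $t$ and at least $1$, and $\beta=\lim_{t\to\infty}\beta_t$, so $\beta$ exists for \emph{every} trajectory. Choosing a time $t_1$ at which $X(t_1)=\frac{t_1}{\beta+\epsilon_1}$ nearly realizes $\beta$, the cone constraint shows the robot cannot have been left of $x_0=-\frac{t_1-X(t_1)}{1+\beta}$ after time $t_0=\frac{\beta(t_1-X(t_1))}{1+\beta}$; placing a single target at $p=-\frac{(1+\beta v)(t_1-X(t_1))}{1+\beta}-\epsilon_0$ moving toward the origin then forces a competitive ratio of at least $1+\frac{(1+\beta)^2}{(1+\beta v)(\beta-1)}$, and minimizing this over $\beta\ge1$ yields $1+8\frac{1-v}{(1+v)^2}$ for $v<\tfrac13$ (interior minimum at $\beta=\frac{v-3}{3v-1}$) and $1+\tfrac1v$ for $v\ge\tfrac13$ (limit as $\beta\to\infty$), with no separate contracting/expanding case split. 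If you want to complete your write-up, replacing the ``limiting ratio of record reaches'' with this $\beta$ (or some equally well-defined asymptotic quantity, handled via $\limsup/\liminf$ and an explicit near-extremal time) is the missing step.
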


\begin{proof}
    Let $\beta_t = \inf_{t^\prime > t} \frac{t^\prime}{|X(t^\prime)|}$.
    Clearly, then, if $t_1 \leq t_2$ then $\beta_{t_1} \leq \beta_{t_2}$.
    Furthermore, $\beta_t \geq 1$ for all $t$ since the maximum speed of the robot is $1$.
    Now let $\beta = \lim_{t \rightarrow \infty} \beta_t$.
    By definition of the limit infimum, there must exist a finite time $t$ such that $\beta \leq \frac{t^\prime}{|X(t^\prime)|}$ for all $t^\prime \geq t$ and thus there must exist a time $t_1 > \frac{\beta (\beta + 1)}{\beta - 1} t$ such that the robot reaches a point (without loss of generality, on the right side of the origin) $X(t_1) = \frac{t_1}{\beta + \epsilon_1}$ for any arbitrarily small $\epsilon_1 > 0$.
    Consider such a time and observe that, by construction, the robot could not have reached any point to the left of $x_0 = -\frac{t_1-X(t_1)}{1+\beta}$ after time $t_0 = \frac{\beta(t_1-X(t_1))}{1+\beta}$ since $x_0 \leq -t$ and $t_0 > t$ (see Figure~\ref{fig:no_dist_toward_cone}).
    \begin{figure}[ht!]
      \centering
      \includegraphics[width=0.6\textwidth]{./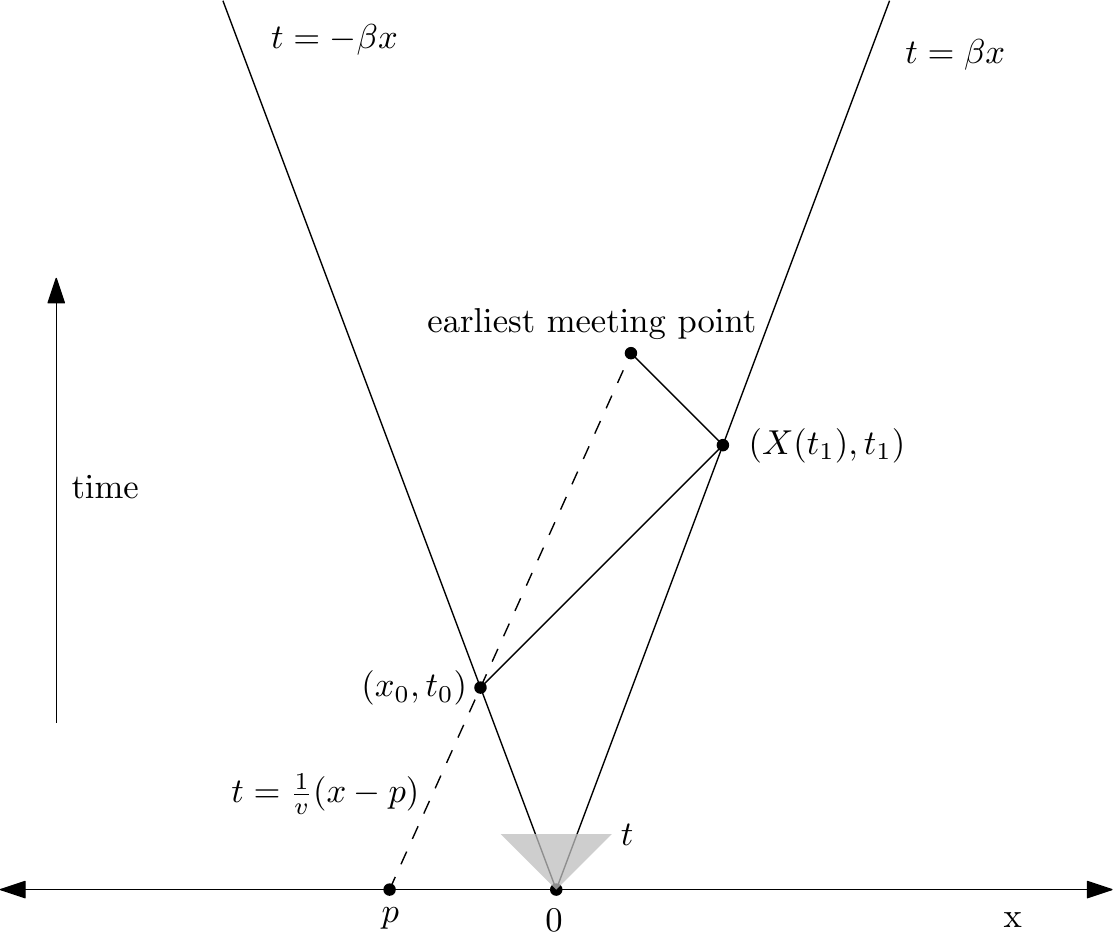}
      \caption{The cone-bounded trajectory of the robot and worst-case placement $p$ of the target. The small gray triangle is to remind the reader that, by the definition of $\beta$, the robot trajectory is only guaranteed to be contained by the cone after some finite time $t$. Thus, in order to maximize the competitive ratio, we (as the adversary) should place the target so that its trajectory does not intersect $(x_0, t_0)$ {\em or} the gray triangle.}
      \label{fig:no_dist_toward_cone}
    \end{figure}
    Now, consider a target starting at initial positon $p$ (to be determined) moving at speed $v>0$ toward a robot which starts at the origin and has a speed of $1$.
    Thus, by placing the target at a starting location so that the farthest {\em right} the robot could have reached is $x_0 - \epsilon_0$ for any arbitrarily small $\epsilon_0$, the robot {\em can not} have reached the target by time $t_1$.
    Such a target has an initial position of 
    \begin{align*}
        p = -\frac{(1+\beta v)(t_1-X(t_1))}{1+\beta} - \epsilon_0
    \end{align*}
    and follows the trajectory
    \begin{align}
        X_{\text{target}}(t) = v t + p\label{eq:no_dist_toward_traj_target}
    \end{align}
    where $X_{\text{target}}(t)$ denotes the robot's position at time $t$.
    Observe also, if the robot moves directly toward the target after $t_1$, then its trajectory after time $t_1$ is given by
    \begin{align}
        X(t) = X(t_1) + t_1 - t \label{eq:no_dist_toward_traj_robot}
    \end{align}
    Thus, the earliest time the robot could possibly encounter the target can be computed by finding the intersection between the robot trajectory (Equation~\eqref{eq:no_dist_toward_traj_robot}) and the target's trajectory (Equation~\eqref{eq:no_dist_toward_traj_target}) and solving for $t$:
    \begin{align}
      v t + p &= X(t_1) + t_1 - t \nonumber \\
      t &= \frac{X(t_1)+t_1-p}{1+v} . \label{eq:no_dist_toward_meeting_time}
    \end{align}

    Then the competitive ratio (Equation~\eqref{eq:no_dist_toward_meeting_time} divided by $-p/(1+v)$, the optimal search time) is
    \begin{align}
        CR &\geq \sup_{\epsilon_0,\epsilon_1} \frac{(X(t_1)+t_1-p)/(1+v)}{-p/(1+v)} 
          = \sup_{\epsilon_0,\epsilon_1} \frac{X(t_1)+t_1-p}{-p} =  \sup_{\epsilon_0,\epsilon_1} \left[ 1 - \frac{X(t_1)+t_1}{p} \right] \nonumber \\
        &= \sup_{\epsilon_1} \left[ 1 + \frac{(1+\beta)(t_1+X(t_1))}{(1+\beta v)(t_1-X(t_1))} \right]
          \label{eq:no_dist_toward_lower_1} \\
        &= 1 + \frac{(1+\beta)^2}{(1 + \beta v)(\beta - 1)}
          \label{eq:no_dist_toward_lower_2}
    \end{align}
    where Inequality~\eqref{eq:no_dist_toward_lower_1} follows since $p = -\frac{(1+\beta v)(t_1-X(t_1))}{1+\beta} - \epsilon_0$ for arbitrarily small $\epsilon_0>0$ and Inequality~\eqref{eq:no_dist_toward_lower_2} follows since $X(t_1) = \frac{t_1}{\beta+\epsilon_1}$ for arbitrarily small $\epsilon_1>0$.
    Finally, observe that if $v < \frac{1}{3}$, then the right-hand side of Equality~\eqref{eq:no_dist_toward_lower_2} has a single minimum of $1 + 8 \frac{1-v}{(1+v)^2}$ at $\beta = \frac{v-3}{3v-1}$.
    On the other hand, if $v \geq \frac{1}{3}$, then the right-hand side of Equality~\eqref{eq:no_dist_toward_lower_2} is decreasing with respect to $\beta$ and thus the competitive ratio satisfies
    \begin{align*}
        CR \geq \lim_{\beta \rightarrow \infty} \left[ 1 + \frac{(1+\beta)^2}{(1 + \beta v)(\beta - 1)} \right] = 1 + \frac{1}{v}.
    \end{align*}
\end{proof}

\section{The {\tt NoSpeed} Model}\label{sec:no_speed}

In this section we assume that the robot knows $d$ but not $v$.

\subsection{The {\tt NoSpeed/Away}}\label{sec:no_speed_away}

For this model, it is clear that the robot cannot execute an algorithm like Algorithm~\ref{alg:full_knowledge_away} since no upper bound on the target's speed is known.
Note that, if any upper bound $\hat{v} < 1$ on the target's speed {\em were} known, the robot could execute Algorithm~\ref{alg:full_knowledge_away} by assuming the target speed to be equal to $\hat{v}$, resulting in a competitive ratio of at most $1 + \frac{2}{1-\hat{v}}$.
Since the target speed $v$ is unknown (and potentially very close to $1$), however, we propose another strategy. 
Consider a monotone increasing non-negative integer sequence $\{ f_i : i \geq 0 \}$ such that $f_0 = 1$ and $f_i < f_{i+1}$, for all $i\geq 0$.
The idea of the algorithm is to search for the target by making a guess about its speed in rounds as follows. 
We start from the origin and alternate searching right and left. 
On the $i$-th round, we use the guess $v_i = 1 - 2^{-f_i}$ and search the necessary distance away from the origin such that, if the target's speed is less than or equal to $v_i$ and the target's initial position is in the same direction from the origin that the robot moves in round $i$, then the target will be found in round $i$.
Otherwise, we can conclude that either the target is moving with a speed greater than $v_i$ or else it is on the opposite side of the origin.
In this case the robot returns to the origin and repeats the algorithm in the opposite direction.

Later in the analysis we will show how to select the integer sequence $\{ f_i : i \geq 0\}$ so as to obtain bounds on the competitive ratio. 
The algorithm explained above is formalized as Algorithm~\ref{alg:no_speed_away}.  

\begin{algorithm}[H]
  \caption{Online Algorithm for {\tt NoSpeed/Away} Model}\label{alg:no_speed_away}
  \begin{algorithmic}[1]
    \State {{\bf input:} target initial distance $d$\\
    ~~~~~~~~~~~~integer sequence $\{ f_i : i \geq 0\}$ such that $f_i < f_{i+1}$, for $i \geq 0$ and $f_0 = 1$;}

    \State $t \gets 0$
    \For {$i \leftarrow 0, 1, 2, \ldots$ until target found}
      \State $v_i \gets 1-2^{-f_i}$
      \State $x_i \gets (-1)^i \cdot \frac{d + tv_i}{1-v_i}$
      \State move to $x_i$ and back to the origin
      \State $t \gets t + |x_i|$
    \EndFor
  \end{algorithmic}
\end{algorithm}

To compensate for the fact that the starting speed of the robot in the algorithm is $v_0 = 1 - 2^{-1}= 1/2$ we first need to consider the case $v \leq \frac 12$.

\begin{lemma}\label{lm:no_speed_upper_1}
  For the {\tt NoSpeed/Away} model, if the unknown speed $v$ of the target is less than or equal to $\frac{1}{2}$ then the competitive ratio of Algorithm~\ref{alg:no_speed_away} is at most $5$.
\end{lemma}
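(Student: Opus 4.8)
The plan is to exploit the fact that $v_0 = 1 - 2^{-f_0} = 1/2$, so the very first guessed speed already upper-bounds the true speed $v$; hence the target is caught within the first two rounds, and it suffices to bound the completion time in each of the two cases (target on the side the robot searches first, versus the opposite side). Throughout, the benchmark offline time is $\frac{d}{1-v}$, since a searcher that knows the target's position and speed closes the gap $d$ at relative speed $1-v$.

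First I would dispose of the case where the target starts on the side the robot searches in round $0$ (say the right). Round $0$ sends the robot out to $x_0 = \frac{d}{1-v_0} = 2d$ and back. Moving right at unit speed, the robot meets the rightward-fleeing target at time $\frac{d}{1-v}$, and since $v \le \frac12$ implies $\frac{d}{1-v} \le 2d = x_0$, this meeting actually occurs on the outward leg of round $0$. As $\frac{d}{1-v}$ is exactly the optimal time, the ratio is $1$ here.

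The substantive case is the target starting on the opposite (left) side. Then round $0$ fails and the robot is back at the origin at time $2x_0 = 4d$. In round $1$ it heads left to its turnaround point, at distance $|x_1| = \frac{d(1+2v_1)}{1-v_1}$ from the origin; since $f_1 > f_0 = 1$ forces $f_1 \ge 2$ and hence $v_1 \ge \tfrac34 > \tfrac12 \ge v$, monotonicity gives $|x_1| \ge 10d$. Tracking real elapsed time, the target (initially at $-d$) is at distance $d + v\theta$ from the origin at time $\theta$, while the robot is at distance $s$ from the origin at time $4d+s$, so they meet when $s(1-v) = d + 4dv$, i.e.\ $s = \frac{d(1+4v)}{1-v} \le 6d \le |x_1|$ for $v \le \tfrac12$, so the meeting indeed happens before the turnaround. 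The completion time is therefore $4d + \frac{d(1+4v)}{1-v}$, and dividing by the optimal time $\frac{d}{1-v}$ yields
\[
CR \le \frac{4d(1-v) + d(1+4v)}{d} = 4(1-v) + (1+4v) = 5 .
\]
Combining the two cases gives $CR \le 5$ (and in fact equality on the worst-case side, for every $v \le \tfrac12$).

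The only part that is not pure bookkeeping is checking that in round $1$ the turnaround distance $|x_1|$ actually exceeds the distance needed to catch the target — this is exactly where the hypotheses $v \le \tfrac12$ and $f_1 \ge 2$ (so $v_1 \ge \tfrac34$) are both used — together with correctly charging the $4d$ spent going out and back in round $0$; once those are in place, the estimate collapses to the identity $4(1-v)+(1+4v)=5$.
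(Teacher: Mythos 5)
Your proof is correct and follows essentially the same route as the paper's: the worst case is the target on the opposite side, the robot's total time works out to $\frac{5d}{1-v}$ (you charge $4d$ for the full out-and-back of round $0$, the paper charges $2d$ to the turnaround point at $+2d$ and closes the gap from there --- the same trajectory and the same meeting time), giving ratio exactly $5$ against the offline time $\frac{d}{1-v}$. Your additional check that the round-$1$ turnaround distance $|x_1|\geq 10d$ exceeds the catch distance ($\leq 6d$) is a point the paper leaves implicit, so it is a welcome bit of extra care rather than a divergence.
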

\begin{proof}
  According to Algorithm~\ref{alg:no_speed_away} and since $v < 1/2$, the robot will find the target either on its first trip away from the origin, after time at most $2d$, or after the first time it changes direction of movement. 
  In the worst case it will spend time $2d$ in one direction and then additional time $\frac{2d + d + 2dv}{1-v}$.
  It follows that the competitive ratio is at most
  \begin{align*}
    \frac{2d + \frac{2d+ d +2dv}{1-v} }{\frac d{1-v}} = 5
  \end{align*}
  which proves Lemma~\ref{lm:no_speed_upper_1}. 
\end{proof}

Next we analyze the competitive ratio of the algorithm for $v > \frac 12$. 
\begin{lemma}\label{lm:no_speed_upper_2}
  For the {\tt NoSpeed/Away} model, if the unknown speed $v$ of the target is greater than $\frac{1}{2}$ then the competitive ratio of Algorithm~\ref{alg:no_speed_away} is at most $1 + 2^{1+\sum_{j=0}^k f_{j}} \cdot 4^{k+1}$ where $k$ is the first $k$ such that $v_k \geq v$.
\end{lemma}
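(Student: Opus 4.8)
The plan is to combine three ingredients: a per‑round \emph{coverage} property, a counting argument pinning down in which round the target is caught, and a geometric recursion bounding the elapsed time, which is then divided by the offline optimum $\tfrac{d}{1-v}$. Throughout, write $t_i$ for the elapsed time at the moment the robot is at the origin starting round $i$ (so $t_0=0$), and recall $\tfrac1{1-v_i}=2^{f_i}$.

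First I would prove the coverage property: if the target lies on the side $(-1)^i$ that the robot explores in round $i$ and its speed satisfies $v\le v_i$, then the robot catches it during round $i$. Indeed, when the robot starts round $i$ at the origin (elapsed time $t_i$), the target is at distance $d+v t_i$, strictly ahead of the robot, and the robot closes this gap at rate $1-v>0$. A direct computation shows the robot and a hypothetical target of speed exactly $v_i$ on that side would both reach $x_i=(-1)^i\frac{d+v_i t_i}{1-v_i}$ at time $\frac{d+t_i}{1-v_i}$; since a target of speed $v\le v_i$ is never farther out, the robot meets it no later than $x_i$, i.e. on the outbound leg of round $i$ and without overshooting. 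This is the step I expect to carry the real work, because it hinges on $t_i$ being (at least) the true elapsed time at the start of round $i$ — so that the turning distance $|x_i|$ is exactly large enough — and on verifying that the meeting point really falls on the outbound portion of the round.

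Now let $k$ be the first index with $v_k\ge v$. Since $f_i$ is strictly increasing, $v_i=1-2^{-f_i}$ strictly increases to $1>v$, so $k$ is finite; moreover $k\ge 1$ because $v_0=\tfrac12<v$, which is exactly why the case $v\le\tfrac12$ had to be treated separately. Rounds $k$ and $k+1$ explore opposite sides and both use guesses at least $v_k\ge v$, so by the coverage property one of these two rounds catches the target, and rounds $\ge k+2$ never occur. In the worst case the robot completes rounds $0,\dots,k$ as full out‑and‑back trips and then meets the target on the outbound leg of round $k+1$, at total time $\tau^\star=\frac{d+v\,t_{k+1}}{1-v}$.

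It remains to bound $t_{k+1}$. From $|x_i|=\frac{d+v_i t_i}{1-v_i}\le (d+t_i)2^{f_i}$ and the fact that round $i$ adds travel time $2|x_i|$, setting $y_i:=d+t_i$ (so $y_0=d$) gives $y_{i+1}\le y_i\bigl(1+2^{f_i+1}\bigr)\le y_i\,2^{f_i+2}$, hence $y_{k+1}\le d\prod_{i=0}^{k}2^{f_i+2}=d\,4^{k+1}2^{\sum_{i=0}^{k}f_i}$. Dividing the total time $\tau^\star$ by the offline optimum $\tfrac{d}{1-v}$,
\[
 CR=\frac{\tau^\star}{d/(1-v)}=1+\frac{v\,t_{k+1}}{d}\le 1+\frac{y_{k+1}}{d}\le 1+4^{k+1}2^{\sum_{i=0}^{k}f_i}\le 1+2^{1+\sum_{i=0}^{k}f_i}4^{k+1},
\]
which is the claimed bound. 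Everything beyond the coverage property is a routine geometric‑series estimate, so the main obstacle is that first step; the constant in the statement is deliberately loose, and indeed the recursion above even yields the slightly sharper $1+2^{\sum_{i=0}^{k}f_i}4^{k+1}$.
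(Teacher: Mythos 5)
Your proposal is correct and takes essentially the same route as the paper: a per-round geometric growth bound (your recursion $y_{i+1}\le y_i\,2^{f_i+2}$ on $y_i=d+t_i$ plays the role of the paper's recurrence $x_{i+1}\le 4\cdot 2^{f_{i+1}}x_i$ on the turning distances), combined with the observation that the target is caught by round $k+1$ and division by the offline time $\frac{d}{1-v}$; your coverage step matches the paper's implicit use of the full out-and-back elapsed time in $d_i=d+2v_i\sum_{j<i}x_j$, so it goes through as you argue. One minor slip: the worst-case total time is $t_{k+1}+\frac{d+v\,t_{k+1}}{1-v}=\frac{d+t_{k+1}}{1-v}=\frac{y_{k+1}}{1-v}$, not $\frac{d+v\,t_{k+1}}{1-v}$, so the ratio is $1+\frac{t_{k+1}}{d}$ rather than $1+\frac{v\,t_{k+1}}{d}$; since $t_{k+1}\le y_{k+1}$, your final chain and the claimed bound (indeed your slightly sharper $1+2^{\sum_{j=0}^{k}f_j}4^{k+1}$) are unaffected.
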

\begin{proof}
  Let $d_i$ be the distance from the origin the target would be if its speed was equal to $v_i$, where $v_i = 1 - 2^{-f_i}$ at time $\sum_{j=0}^{i-1} \left( 1 - 2^{-f_i} \right)$.
  In other words, if $v_i \geq v$, then $d_i$ is the maximum distance of the target from the origin (and thus, the robot) at the beginning of round $i$ of the algorithm.
  Thus, if the speed of the target is less than or equal to $v_i$ and the robot moves toward it in round $i$, then it would take at most $x_i = \frac{d_i}{1-v_i} = 2^{f_i} d_i$ additional time for the robot to catch up to the target, for $i \geq 0$.
  Recall the algorithm involves the robot moving a distance $x_i$ (in time $x_i$, since the robot's speed is $1$) away from the origin and back in round $i$.
  Observe then that $d_0 = d$, $v_0 = 1/2$, and
  \begin{align}
    d_i = d+ 2v_i \sum_{j=0}^{i-1}  x_j. \label{eq:no_speed_upper_guess_d}
  \end{align}
  Therefore, it follows from the definition of $x_i$ that
  \begin{align}
    x_i = 2^{f_i} \left( d+ 2v_i \sum_{j=0}^{i-1}  x_j \right) . \label{eq:no_speed_upper_guess_x}
  \end{align}
  As a consequence
  \begin{align}
    \sum_{j=0}^{i-1}  x_j = \frac{x_i - 2^{f_i} d}{2^{f_i} \cdot 2 \cdot v_i} \label{eq:no_speed_upper_guess_3}
  \end{align}
  Similarly, if we replace $i$ with $i+1$ we have that
  \begin{align}
    \sum_{j=0}^{i} x_j = \frac{x_{i+1} - 2^{f_{i+1}} d}{2^{f_{i+1}} \cdot 2 \cdot v_{i+1}} \label{eq:no_speed_upper_guess_4}
  \end{align}
  Subtracting Equation~\eqref{eq:no_speed_upper_guess_3} from  Equation~\eqref{eq:no_speed_upper_guess_4}, we derive the recurrence
  \begin{align}
    x_i = \frac{x_{i+1} - 2^{f_{i+1}} d}{2^{f_{i+1}} \cdot 2 \cdot v_{i+1}} - \frac{x_i - 2^{f_i} d}{2^{f_i} \cdot 2 \cdot v_i} \label{eq:no_speed_upper_guess_5} 
  \end{align}
  Collecting similar terms and simplifying Equation~\eqref{eq:no_speed_upper_guess_5} yields
  \begin{align}
    \frac{x_{i+1}}{2^{f_{i+1}} \cdot 2 \cdot v_{i+1}}
      &= \notag \left( 1 + \frac 1{2^{f_i} \cdot 2 \cdot v_i}\right) x_i
      + \left( \frac{2^{f_{i+1}}}{2^{f_{i+1}} \cdot 2 \cdot v_{i+1}} - 
      \frac{2^{f_i}}{2^{f_i} \cdot 2 \cdot v_{i}} \right) d \\
    &= x_i \left( 1+\frac{1}{2^f_i \cdot 2 \cdot v_i} \right) + \frac{d}{2 v_{i+1}} - \frac{d}{2 v_{i}} \label{eq:no_speed_upper_guess_5_inter} \\
    &\leq \left( 1 + \frac 1{2^{f_i} \cdot 2 \cdot v_i}\right) x_i \label{eq:no_speed_upper_guess_5_new} 
  \end{align}
  following from the fact the sum of the last two terms in Inequality~\ref{eq:no_speed_upper_guess_5_inter} is less than or equal to $0$.

  If we simplify the right-hand side of Equation~\eqref{eq:no_speed_upper_guess_5_new}, we derive the following recursive inequalities
  \begin{align}
    x_{i+1} &\leq 2^{f_{i+1}} \cdot 2 \cdot v_{i+1} \left( 1+ \frac 1{2^{f_i} \cdot 2 \cdot v_i}\right) x_i \notag \\
    &\leq 2^{f_{i+1}} \cdot 2 \cdot \left( 1+ \frac 1{2^{f_i}}\right) x_i \notag \\
    &\leq \left( 2 \cdot 2^{f_{i+1}} + 2\cdot 2^{f_{i+1} - f_i}\right) x_i \notag
    \\
    &\leq 2^{f_{i+1} } \cdot 4 \cdot x_i , \label{eq:no_speed_upper_guess_6} 
  \end{align}
  which follows since $\frac{1}{2} \leq v_i < 1$ and $f_i < f_{i+1}$ for all $i$.

  By repeated application of the last Recurrence~\eqref{eq:no_speed_upper_guess_6} above and using the fact that by definition $x_0 = 2^{f_0}d$, it follows easily by induction that
  \begin{align}
    x_{i+1} 
    &\leq \notag 2^{f_{i+1} } \cdot 4 \cdot x_i \\
    &\leq \notag 2^{f_{i+1}+f_i} \cdot 4^2 \cdot x_{i-1}  \\
    &~ \vdots \notag \\
    &\leq \notag 2^{f_{i+1}+f_i+f_{i-1}+ \cdots + f_1} \cdot 4^{i+1} \cdot x_0  \\
    &\leq \label{eq:no_speed_upper_guess_7} 2^{\sum_{j=0}^{i+1} f_j} \cdot 4^{i+1} \cdot d 
  \end{align}

  Consider the first $i$ such that $v_i \geq v$. 
  It follows that and $v_{i-1} < v$ which yields $1-v < 1-v_{i-1 } = 2^{-f_{i-1}} $ and implies that $2^{f_{i-1}} < \frac 1{1-v}$. 
  Note that although $v_i \geq v$, the robot may not find the target in round $i$ because it is located in the opposite direction.
  It is guaranteed, however, to find the target by round $i+1$.
  Moreover the total time that has elapsed from the start until round $i$ is $2 \sum_{j=0}^i x_j$ at which time the target is at distance $d + v 2 \sum_{j=0}^i x_j$ from the origin. 

  As a consequence the competitive ratio of Algorithm~\ref{alg:no_speed_away} is at most
  \begin{align*}
    \frac{2 \sum_{j=0}^i x_j + \frac{d+ 2 v \sum_{j=0}^i x_j}{1-v}}{\frac d{1-v}} &= 1 + \frac{2(v + 1-v)}d \sum_{j=0}^i x_j
      \\
    &= 1 + \frac{x_{i+1} - 2^{f_{i+1}}d}{d} \cdot \frac{2}{2^{f_{i+1}}\cdot 2 \cdot v_{i+1}}
      &&\text{(By~\eqref{eq:no_speed_upper_guess_4})} \\ 
    &\leq 1 + \frac{x_{i+1}}d \cdot \frac{2}{2^{f_{i+1}}\cdot 2 \cdot v_{i+1}}
      \\ 
    & \leq 1+ \frac{1}{v_{i+1}} 2^{\sum_{j=0}^i f_{j}} \cdot 4^{i+1} 
      &&\text{(By~\eqref{eq:no_speed_upper_guess_7})} 
  \end{align*}
  Since $v_{i+1} \geq 1/2$ we conclude with an upper bound on the competitive ratio of Algorithm~\ref{alg:no_speed_away} of
  \begin{align}
    1+ 2^{1+\sum_{j=0}^i f_{j}} \cdot 4^{i+1} \label{eq:no_speed_upper_guess_8}
  \end{align}
  which proves Lemma~\ref{lm:no_speed_upper_2}. 
\end{proof}

We are now ready to prove the main theorem about the competitive ratio of Algorithm~\ref{alg:no_speed_away}.
\begin{theorem}\label{thm:no_speed_away_upper}
  For the {\tt NoSpeed/Away} model, the competitive ratio of Algorithm~\ref{alg:no_speed_away} when applied to the sequence $f_j = 2^j$, for all $j \geq 0$, is at most
  \begin{align*}
    \begin{cases}
      5 &\text{if } v \leq \frac{1}{2} \\
      1+\frac{16 \left( \log  \frac{1}{1-v} \right)^2}{(1-v)^4} &\text{otherwise}
    \end{cases}
  \end{align*}
  where $\log$ is the base-2 logarithm.
\end{theorem}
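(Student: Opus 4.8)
The plan is to reduce everything to the two lemmas already proved and then do a short estimate of the round index $k$ that arises in Lemma~\ref{lm:no_speed_upper_2} for the specific choice $f_j=2^j$. If $v\le\frac12$, Lemma~\ref{lm:no_speed_upper_1} immediately gives a competitive ratio of at most $5$ and there is nothing left to do. So assume $\frac12<v<1$; then $v_0=\frac12<v$ and the sequence $v_i=1-2^{-f_i}$ is strictly increasing to $1$, so the first index $k$ with $v_k\ge v$ exists and satisfies $k\ge 1$, and Lemma~\ref{lm:no_speed_upper_2} applies.

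Next I would substitute $f_j=2^j$ into the bound $1+2^{1+\sum_{j=0}^{k}f_j}\cdot 4^{k+1}$ of Lemma~\ref{lm:no_speed_upper_2}. Since $\sum_{j=0}^{k}2^j=2^{k+1}-1$ and $4^{k+1}=2^{2k+2}$, this bound becomes $1+2^{\,2^{k+1}+2k+2}$. Writing $L:=\log\frac1{1-v}$ (base-$2$), and using $2^{4L}=(1-v)^{-4}$ and $2^{2\log L}=L^2$, the claimed expression equals $1+2^{\,4L+2\log L+4}$. So it suffices to show $2^{k+1}+2k+2\le 4L+2\log L+4$ (strict inequality is fine and in fact what we get).

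The key step is bounding $k$. By minimality of $k$ we have $v_{k-1}<v$, i.e.\ $1-2^{-f_{k-1}}<v$, hence $2^{f_{k-1}}<\frac1{1-v}$ and so $f_{k-1}<L$. With $f_{k-1}=2^{k-1}$ this reads $2^{k-1}<L$, and since $v>\frac12$ gives $L>1$ (so $\log L>0$), this yields two facts: $2^{k+1}=4\cdot 2^{k-1}<4L$, and $k-1<\log L$, i.e.\ $2k+2<2\log L+4$. Adding them gives $2^{k+1}+2k+2<4L+2\log L+4$, which is exactly what was needed; hence the competitive ratio is at most $1+\frac{16\left(\log\frac1{1-v}\right)^2}{(1-v)^4}$.

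I do not expect a real obstacle here — the argument is pure bookkeeping once Lemmas~\ref{lm:no_speed_upper_1} and~\ref{lm:no_speed_upper_2} are in hand. The only points needing a little care are checking that $k$ is well defined and that $L>1$ so that $\log L$ behaves, both of which follow from the standing assumption $\frac12<v<1$. Conceptually, the ``hard part'' is already absorbed into Lemma~\ref{lm:no_speed_upper_2}: the doubly exponential schedule $f_j=2^j$ is precisely what forces $k=O\!\left(\log\log\frac1{1-v}\right)$, so that the a priori tower-like bound collapses to something polynomial in $\frac1{1-v}$ up to the squared-logarithmic factor.
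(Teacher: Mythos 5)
Your proposal is correct and follows essentially the same route as the paper: dispatch $v\le\frac12$ via Lemma~\ref{lm:no_speed_upper_1}, then plug $f_j=2^j$ into Lemma~\ref{lm:no_speed_upper_2} and use the minimality of $k$ (i.e.\ $2^{2^{k-1}}<\frac1{1-v}$) to bound the doubly exponential factor by $(1-v)^{-4}$ and the $4^{k+1}$ factor by $16\log^2\frac1{1-v}$. Your exponent-level bookkeeping is just a rewriting of the paper's multiplicative estimate, with the well-definedness of $k$ and the $k\ge 1$ edge case made slightly more explicit.
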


\begin{proof}
Consider the first index $i$ such that $v_i \geq v$. 
It follows that and $v_{i-1} < v$, and so
\begin{align*}
  1-2^{-2^{i-1}} < v 
  \Rightarrow 2^{2^{i-1}} < \frac{1}{1-v} .
\end{align*}
Then, by Lemma~\ref{lm:no_speed_upper_2}, the competitive ratio of is at most
\begin{align*}
  1+ 2^{1+\sum_{j=0}^i f_{j}} \cdot 4^{i+1} 
    &= 1+ 2^{1+\sum_{j=0}^i 2^j} \cdot 4^{i+1} \\
    &= 1+2^{2^{i+1} } \cdot 4^{i+1} 
      \leq 1+\left(\frac{1}{1-v}\right)^4 \cdot \left(4 \cdot \log \left( \frac{1}{1-v} \right) \right)^2 \\ 
    &\leq 1+ \frac{16  \left( \log \frac{1}{1-v} \right)^2}{(1-v)^4}
\end{align*}
This proves Theorem~\ref{thm:no_speed_away_upper}. 
\end{proof}

\begin{remark}
  By Theorem~\ref{thm:full_knowledge_away}, $ 1 + \frac{2}{1-v}$ is a lower bound on any algorithm when both $v, d$ are known. As a consequence it must also be a lower bound when $d$ is known but $v$ is not.
\end{remark}

\subsection{The {\tt NoSpeed/Toward} Model}
\label{sec:no_speed_toward}

Now we consider the case where the target is moving toward the origin.

\begin{algorithm}[H]
  \caption{Online Algorithm for {\tt NoSpeed/Toward} Model}\label{alg:full_knowledge_toward1}
  \begin{algorithmic}[1]
    \State {{\bf input:} target initial distance $d$}
    \State{choose any direction and go for time $d$}
    \If{target not found}
      \State{change direction and go until target is found}
    \EndIf
  \end{algorithmic}
\end{algorithm}

\begin{theorem}\label{thm:no_speed_toward_upper}
  For the {\tt NoSpeed/Toward} model, Algorithm~\ref{alg:full_knowledge_toward1} achieves an optimal competitive ratio at most $3$.
\end{theorem}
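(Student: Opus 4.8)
The plan is a short case analysis on which side of the origin the target starts and on the size of the unknown speed $v$, computing in each case the robot's completion time and dividing by the offline optimum. The offline optimum is always $\frac{d}{1+v}$: knowing the target's position and velocity, the robot heads straight at it and meets it with closing speed $1+v$; this beats waiting (which costs $\frac{d}{v} > \frac{d}{1+v}$) and any detour is worse. So it suffices to show Algorithm~\ref{alg:full_knowledge_toward1} never uses more than $\frac{3d}{1+v}$ time.

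Assume without loss of generality that the robot first moves right. First, if the target starts at $+d$ (same side), then during the first leg the robot and target approach with closing speed $1+v$ and meet at time $\frac{d}{1+v}\le d$, so the search ends in the first leg with ratio $1$. Second, if the target starts at $-d$, I would write the robot's position as $t$ and the target's as $-d+vt$ for $t\in[0,d]$, so the signed gap is $d-(v-1)t$; this is nonincreasing exactly when $v\ge 1$, and it reaches $0$ within the first leg (at $t=\frac{d}{v-1}\le d$) exactly when $v\ge 2$. Hence I split once more: if $v\ge 2$, the target overtakes the departing robot at time $\frac{d}{v-1}$, for ratio $\frac{1+v}{v-1}$, which is decreasing in $v$ and equals $3$ at $v=2$; if $v<2$, the gap stays positive on $[0,d]$, the robot completes the first leg and turns around at time $d$ at position $d$, at which moment the target sits at $d(v-1)\in(-d,d)$ still moving right, so the two again close at rate $1+v$ from a gap of $d(2-v)$ and meet after a further $\frac{d(2-v)}{1+v}$, for a total of $d+\frac{d(2-v)}{1+v}=\frac{3d}{1+v}$ and ratio exactly $3$. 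Taking the worst case gives $CR\le 3$.

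The step I would be most careful about is that the {\tt Toward} model puts no upper bound on $v$: the naive ``the target is slow, so the robot always turns around before the far side matters'' intuition fails, because for $v\ge 2$ the target catches the robot during its outbound leg. Fortunately this sub-case only helps — its ratio is at most $3$, with equality at the threshold $v=2$, matching the $v<2$ analysis — so it is not really an obstacle, just a case that must not be forgotten.

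For the optimality claim I would give a matching lower bound: letting $v\to 0^+$, the target is essentially a static point at an unknown one of $\pm d$, and a standard argument shows the robot must travel distance at least $3d$ (it must first visit one of $\pm d$, at which time the adversary has placed the target at the other, forcing an additional $2d$ of travel), so every algorithm has $CR\ge 3$ against the offline optimum $d$. Hence $CR=3$ and Algorithm~\ref{alg:full_knowledge_toward1} is optimal.
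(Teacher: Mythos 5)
Your upper-bound analysis is correct and is essentially the paper's argument: worst case is the target starting on the opposite side, giving total time $\frac{3d}{1+v}$ against the offline optimum $\frac{d}{1+v}$. Your extra split at $v\ge 2$ (the target overtaking the robot on the outbound leg, with ratio $\frac{1+v}{v-1}\le 3$) is in fact slightly more careful than the paper's single computation, whose ``turn around and close at rate $1+v$'' description is only literally accurate for $v<2$, although its formula still yields a valid bound.

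The gap is in the optimality (lower-bound) half. You argue ``letting $v\to 0^+$, the target is essentially static, so the robot must first visit one of $\pm d$ and then travel an additional $2d$.'' But $v=0$ is not an admissible instance (the model has $v>0$), and for any fixed $v>0$ the claim ``the robot must visit $\pm d$'' is simply false: the target drifts toward the origin, so the robot can intercept it strictly inside $[-d,d]$, and in particular it can stall and let the target come to it, so ``total travel at least $3d$'' does not follow. To make the limit argument rigorous you must quantify the tradeoff between elapsed time and the target's inward drift, which is exactly what the paper does: fix $\epsilon>0$, set $v=\epsilon/3$, let $t$ be the first time the robot reaches one of $\pm(d-\epsilon)$ (first ruling out algorithms that never get that close, whose ratio blows up as $v\to 0$), and note the adversary placing the target on the other side forces remaining distance $2d-tv-\epsilon$. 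Then split: if $t\ge 3d-\epsilon$ the elapsed time alone already gives $CR\ge 3-3\epsilon$, while if $t\le 3d-\epsilon$ the drift term satisfies $\frac{2\epsilon+(t+\epsilon)v}{d}\le 2\epsilon+3v\le 3\epsilon$, again giving $CR\ge 3-3\epsilon$; since $\epsilon$ is arbitrary, $CR\ge 3$. Your proposal waves at this conclusion but omits precisely the mechanism (choice of $v$ as a function of $\epsilon$, the $\pm(d-\epsilon)$ surrogate for $\pm d$, and the case analysis on $t$) that makes the static-target intuition survive the fact that the target genuinely moves, so as written the lower bound is not established.
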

\begin{proof}
  The robot chooses a direction (without loss of generality, say to the right) and goes for a time $d$ (this is where the robot makes use of its knowledge of the distance $d$). 
  If it does not find the target it changes direction. 
  In the meantime the target has moved for a distance $dv$ and now must be at location $-d + dv$. 
  Therefore at the time the robot changes direction the distance between robot and target is equal to $d - (-d + dv) = 2d -dv$, and hence the meeting will take place in additional time $\frac {2d-dv}{1+v}$. 
  It follows that the total time required for the robot to meet the target must be equal to  $d + \frac {2d-dv}{1+v}$. 
  The resulting competitive ratio satisfies
  \begin{align*}
    CR \leq \frac{d + \frac {2d-dv}{1+v}}{\frac d{v+1}} = 3.
  \end{align*}
  This proves the upper bound. 

  To prove the lower bound we argue as follows. 
  If the searcher never visits either of the points $\pm d$ then the competitive ratio is arbitrarily large for very small values of $v$. 
  Let $\epsilon > 0$ be sufficiently small and let the speed of the target be $v = \epsilon/3$. 
  Consider the first time, say $t$, that the robot reaches one of the points $\pm (d-\epsilon)$. 
  Without loss of generality let this point be $d-\epsilon$ and suppose the target is adversarially placed at $-d$.
  Then at time $t$ it will be located at $-d + tv$. 
  Therefore the distance between the robot and the target at time $t$ will be $d-\epsilon - (-d +tv) = 2d - tv -\epsilon$. 
  The time it takes for robot to find the target, then, is at least $d - \epsilon +\frac{2d - tv -\epsilon}{1+v}$ and the competitive ratio is at least
  \begin{align*}
    \frac{d - \epsilon +\frac{2d - tv -\epsilon}{1+v}}{\frac d{1+v}} 
      \geq 3 - \frac {2 \epsilon + (t+\epsilon)v}{d}    
  \end{align*}

  It follows easily that if $t \geq 3d - \epsilon$ then $CR \geq \frac{t}{ d/(1+v)} \geq 3 - 3 \epsilon$. 
  However, if $t \leq 3d - \epsilon$ then $ \frac {2 \epsilon + (t+\epsilon)v}{d} \leq 2 \epsilon + 3v \leq 3 \epsilon$, since by assumption $v = \epsilon /3$. 
  Therefore again $CR \geq 3 - 3 \epsilon$.
  This completes the proof of Theorem~\ref{thm:no_speed_toward_upper}. 
  
\end{proof}

\section{The {\tt NoKnowledge} Model}\label{sec:no_knowledge}
  
In this section we assume that neither the initial distance $d$ nor the speed $v$ of the target is known to the robot.

\subsection{The {\tt NoKnowledge/Away} Model}
\label{sec:no_knowledge_away}

We now describe an approximation strategy resembling that described in Section~\ref{sec:no_speed_away}.
For this strategy though, the robot will need to guess both the target's speed {\em and} its initial distance.

Consider the situation where neither the distance $d$ to the target nor its  speed $v<1$ is known to the robot. Also consider two monotone increasing non-negative integer sequences $\{ f_i, g_i : i \geq 0 \}$ such that $f_0 = 1$ and $g_0 = 0$ and $f_i < f_{i+1}$ and $g_i < g_{i+1}$, for all $i\geq 0$.
The idea of the algorithm is to search for the target by making a guess for its speed and starting distance in rounds as follows. 
The robot, starting from the origin, alternates searching to the right and left. 
On the $i$-th round, it guesses that the target's speed does not exceed $v_i = 1 - 2^{-f_i}$ and that it's initial distance from the origin does not exceed $2^{g_i}$.
Using these guesses, the robot searches exactly the distance required (which we will later denote $d_i$) to catch the target, given its guesses are correct {\em and} that the target is in the direction the robot searches in round $i$.
If robot does not find the target after searching this distance, it returns to the origin and begins the next round.
Later in the analysis we will show how to select the integer sequences $\{ f_i, g_i : i \geq 0\}$ so as to obtain bounds on the competitive ratio.
We formalize the algorithm described above as Algorithm~\ref{alg:no_knowledge_away}.
\begin{algorithm}[H]
  \caption{Online Algorithm for {\tt NoKnowledge/Away} Model}\label{alg:no_knowledge_away}
  \begin{algorithmic}[1]
    \State {{\bf Inputs;} 
    Integer sequences $\{ f_i, g_i : i \geq 0\}$ such that $f_i < f_{i+1}$ and $g_i < g_{i+1}$, for $i \geq 0$ and $f_0 = 1$ and $g_0 = 0$;}
    
    \State $t \gets 0$
    \For {$i \leftarrow 0, 1, 2, \ldots$ until target found}
      \State $d_i \gets 2^{g_i}$
      \State $v_i \gets 1-2^{-f_i}$
      \State $x_i \gets (-1)^i \cdot \frac{d_i + tv_i}{1-v_i}$
      \State move to $x_i$ and back to the origin
      \State $t \gets t + |x_i|$
    \EndFor
  \end{algorithmic}
\end{algorithm}

Since there is always an integer $i\ \geq 1$ such that both $v_i = 1 - 2^{-f_i} \geq v$ and $2^{g_i} \geq d$, it is clear that the robot will eventually succeed in catching the target. 
Next we analyze the competitive ratio of the algorithm. 

\begin{mylmmrep}\label{lm:no_knowledge_upper_2fg}
  For the {\tt NoKnowledge/Away} model, if Algorithm~\ref{alg:no_knowledge_away} terminates successfully in round $i+1$ then its competitive ratio must satisfy
  \begin{equation}
    CR \leq 1+ \frac{2(i+2)}{d} \cdot 2^{g_{i+1} } \cdot 2^{ \sum_{j=0}^i f_j } \cdot 4^{i+1} . \label{eq:lmquessfg}
  \end{equation}
\end{mylmmrep}
\begin{proof}
  We call each iteration of the loop in Algorithm~\ref{alg:no_knowledge_away} a {\em round}.
  For any round $i$, let $d_i$ be the distance from the origin to where the target would be if its speed was equal to $v_i = 1 - 2^{-f_i}$ and its starting position $2^{g_i}$.
  Recall that during the first $i-1$ unsuccessful rounds, the taret is moving further and further away from the origin.
  If the robot is at the origin and the speed of the target is $v_i$ then it takes time at most $x_i = \frac{d_i}{1-v_i} = 2^{f_i} d_i$ for the robot to catch up to the target, for $i \geq 0$. 
  Observe from the algorithm that $d_0 = 1$ and $v_0 = 1/2$ and
  \begin{align}
    d_i = 2^{g_i}+ 2v_i \sum_{j=0}^{i-1}  x_j . \label{eq:no_knowledge__upper_guess_dfg}
  \end{align}
  Therefore, it follows from the definition of $x_i$ that
  \begin{align}
    x_i &= \label{eq:no_knowledge__upper_guess_xfg}
    2^{f_i} \left( 2^{g_i} + 2v_i \sum_{j=9}^{i-1}  x_j \right) .
  \end{align}
  As a consequence
  \begin{align}
    \sum_{j=0}^{i-1}  x_j 
    &= \label{eq:no_knowledge__upper_guess_3} \frac{x_i - 2^{f_i + g_i} }{2^{f_i} \cdot 2 \cdot v_i}
  \end{align}
  Similarly, if we replace $i$ with $i+1$ we have that
  \begin{align}
    \sum_{j=0}^{i}  x_j 
    &= \label{eq:no_knowledge_upper_guess_4fg} \frac{x_{i+1} - 2^{f_{i+1} + g_{i+1}}}{2^{f_{i+1}} \cdot 2 \cdot v_{i+1}} .
  \end{align}
  Subtracting Equation~\eqref{eq:no_knowledge__upper_guess_3} from  Equation~\eqref{eq:no_knowledge_upper_guess_4fg} we derive the recurrence
  \begin{align}
    x_i
    &= \label{eq:no_knowledge_upper_guess_5fg} 
    \frac{x_{i+1} - 2^{f_{i+1} + g_{i+1}} }{2^{f_{i+1}} \cdot 2 \cdot v_{i+1}} -
    \frac{x_i - 2^{f_i + g_i} }{2^{f_i} \cdot 2 \cdot v_i}
  \end{align}
  Collecting similar terms and simplifying Equation~\eqref{eq:no_knowledge_upper_guess_5fg} yields
  \begin{align}
    \frac{x_{i+1}}{2^{f_{i+1}} \cdot 2 \cdot v_{i+1}}
      &= \notag \left( 1 + \frac 1{2^{f_i} \cdot 2 \cdot v_i}\right) x_i
      + \left( \frac{2^{f_{i+1} + g_{i+1}}}{2^{f_{i+1}} \cdot 2 \cdot v_{i+1}} - 
      \frac{2^{f_i+g_i}}{2^{f_i} \cdot 2 \cdot v_{i}} \right) \\
    &= \notag \left( 1 + \frac 1{2^{f_i} \cdot 2 \cdot v_i}\right) x_i
      + 2^{g_{i+1}-1} \left( 
        \frac{1}{v_{i+1}} - 
        \frac{2^{g_i}}{2^{g_{i+1}} \cdot v_{i}} 
      \right) \\
    &\leq \label{eq:no_knowledge_upper_guess_5_newfg} 
      \left( 1 + \frac 1{2^{f_i} \cdot 2 \cdot v_i}\right) x_i + 2^{g_{i+1} -1 }
  \end{align}
  where Inequality~\eqref{eq:no_knowledge_upper_guess_5_newfg} follows since $\frac{1}{v_{i+1}} - \frac{2^{g_i}}{2^{g_{i+1}} \cdot v_{i}} \leq 1$.

  If we multiply out with the denominator in the lefthand side of Inequality~\eqref{eq:no_knowledge_upper_guess_5_newfg} and simplify the righthand side we derive  the following recursive inequalities
  \begin{align}
    x_{i+1} &\leq \notag
    2^{f_{i+1}} \cdot 2 \cdot v_{i+1} \left( 1+ \frac 1{2^{f_i} \cdot 2 \cdot v_i}\right) x_i + 2^{f_{i+1} + g_{i+1}} v_{i+1}\\
    &\leq \label{eq:no_knowledge_upper_guess_6referfg}
    \left( 2 (2^{f_{i+1}}-1) +2^{f_{i+1}-f_i} \cdot \frac{v_{i+1}}{v_i}\right) x_i + 2^{f_{i+1} + g_{i+1}} \\
    &\leq \label{eq:no_knowledge_upper_guess_6fg} 
    2^{f_{i+1} } \cdot 4 \cdot x_i + 2^{f_{i+1} + g_{i+1}} ,
  \end{align}
  where in the derivation of Inequality~\eqref{eq:no_knowledge_upper_guess_6fg} from the previous Inequality~\eqref{eq:no_knowledge_upper_guess_6referfg} we used the fact that $\frac{v_{i+1}}{v_i} \leq 2$.

  By repeated application of the last Recurrence~\eqref{eq:no_knowledge_upper_guess_6fg} above and using the fact that by definition $x_0 = 2^{f_0}d$, it follows easily by induction that
  \begin{align}
    x_{i+1} 
    &\leq \notag 2^{f_{i+1} } \cdot 4 \cdot x_i + 2^{f_{i+1} + g_{i+1}} \\
    &\leq \notag 2^{f_{i+1}+f_i} \cdot 4^2 \cdot x_{i-1}  + 2^{f_{i+1}+f_i+g_i} \cdot 4^1 + 2^{f_{i+1} + g_{i+1}} \\
    &~ \vdots \notag \\
    &\leq \notag 2^{g_0+ \sum_{j=0}^{i+1} f_j} \cdot 4^{i+1} \cdot x_0 +  2^{g_1+ \sum_{j=1}^{i+1} f_j} \cdot 4^{i} + 2^{g_2+ \sum_{j=2}^{i+1} f_j} \cdot 4^{i-1} \\
    & \notag ~~~+ \cdots +  2^{f_{i+1} + g_{i+1}} \\
    &= \label{eq:no_knowledge_guess}
    \sum_{k = 0}^{i+1}  2^{g_k+ \sum_{j=k}^{i+1} f_j} \cdot 4^{i-k+1} 
  \end{align}
  since $x_0=1$.

  The total time that has elapsed from the start until the beginning of last round $i$ (when the robot visits the origin for the last time before catching the target) will be $\sum_{j=0}^i 2x_j$ at which time the target is at distance $d + v \sum_{j=0}^i 2x_j$ from the origin. 
  As a consequence the competitive ratio of Algorithm~\ref{alg:no_knowledge_away} must satisfy the inequality
  \begin{align}
    CR &\leq \label{eq:no_knowledge_upper_cr}
    \frac{2 \sum_{j=0}^i x_j + \frac{d+ 2 v \sum_{j=0}^i x_j}{1-v}}{\frac d{1-v}} .
  \end{align}
  Simplifying the righthand side of Inequality~\eqref{eq:no_knowledge_upper_cr} and using Identity~\eqref{eq:no_knowledge__upper_guess_3} yields
  \begin{align}
    CR &\leq \notag
    1+ \frac2d \sum_{j=0}^i x_j \\
    &\leq \notag
    1 + \frac{x_{i+1}}d \cdot \frac{1}{2^{f_{i+1}} \cdot v_{i+1}}
    \mbox{~(Use Equation~\eqref{eq:no_knowledge__upper_guess_3})}\\ 
    & \leq \notag
    1+ \frac{1}{v_{i+1} d 2^{f_{i+1}} }  \sum_{k = 0}^{i+1}  2^{g_k+ \sum_{j=k}^{i+1} f_j} \cdot 4^{i-k+1}
    \mbox{~(Use Equation~\eqref{eq:no_knowledge_guess})} \\
    & \leq \notag 
    1+ \frac{1}{v_{i+1} d }  \sum_{k = 0}^{i+1}  2^{g_k+ \sum_{j=k}^{i} f_j} \cdot 4^{i-k+1} .
  \end{align}
  Since $v_{i+1} \geq 1/2$ we conclude with
  \begin{align}
    CR &\leq \notag
    1 +  \frac{2}{d }  \sum_{k = 0}^{i+1}  2^{g_k+ \sum_{j=k}^{i} f_j} \cdot 4^{i-k+1}\\
    & \leq \notag
    1 + \frac{2}{d }  \sum_{k = 0}^{i+1}  2^{g_k+ \sum_{j=k}^{i} f_j} \cdot 4^{i-k+1} \\
    & \leq \label{eq:no_knowledge_upper_guess_8fg}
    1 + \frac{2(i+2)}{d } \cdot 2^{g_{i+1} } \cdot 2^{ \sum_{j=0}^i f_j } \cdot 4^{i+1}
  \end{align}
  This completes the proof of Lemma~\ref{lm:no_knowledge_upper_2fg}.
  
\end{proof}

We now prove the following theorem.
\begin{theorem}\label{thm:no_knowledge_away_upper_alt}
  For the {\tt NoKnowledge/Away} model, Algorithm~\ref{alg:no_knowledge_away} with the sequences $g_i = f_i = 2^i$ has a competitive ratio of at most
  \begin{align*}
    1 + \frac{16}{d} \left[ \log \log \max\left(d, \frac{1}{1-v}\right) + 3\right] \cdot \max\left(d, \frac{1}{1-v}\right)^{8}
      \cdot \log^2 \max\left(d, \frac{1}{1-v}\right)
  \end{align*}
  where $\log$ is the base-2 logarithm.
\end{theorem}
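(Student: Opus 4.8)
The plan is to apply Lemma~\ref{lm:no_knowledge_upper_2fg} with $f_i=g_i=2^i$, after first controlling the index of the round in which Algorithm~\ref{alg:no_knowledge_away} terminates. \emph{Step 1 (termination round).} I would first observe that once the round index $i$ is large enough that both guesses are simultaneously valid — that is, $v_i=1-2^{-f_i}\ge v$ and $d_i=2^{g_i}\ge d$ — the robot is guaranteed to catch the target by round $i+1$: by the construction of $x_i$ in Algorithm~\ref{alg:no_knowledge_away}, in such a round the robot travels out at least as far as the (slower, closer-than-guessed) target can possibly be, so it is caught in round $i$ if the target lies on the searched side and in round $i+1$ otherwise, since the search direction alternates and the $x_j$ are increasing. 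With $f_i=g_i=2^i$ the condition $v_i\ge v$ reads $2^{2^i}\ge\frac1{1-v}$, i.e.\ $i\ge\log\log\frac1{1-v}$, and $d_i\ge d$ reads $i\ge\log\log d$; writing $M:=\max\!\bigl(d,\tfrac1{1-v}\bigr)$, both hold as soon as $i\ge\log\log M$. Hence Algorithm~\ref{alg:no_knowledge_away} terminates in some round $i^\star+1$ with $i^\star\le\lceil\log\log M\rceil$.

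\emph{Step 2 (substitution).} Since the right-hand side of \eqref{eq:lmquessfg} is increasing in $i$ (each of $2(i+2)$, $2^{g_{i+1}}$, $2^{\sum_{j=0}^i f_j}$, and $4^{i+1}$ is increasing), it suffices to evaluate it at $i=\lceil\log\log M\rceil\ge i^\star$. Plugging $f_j=g_j=2^j$ gives $g_{i+1}=2^{i+1}$ and $\sum_{j=0}^i f_j=2^{i+1}-1$, so
\[
  2^{g_{i+1}}\cdot 2^{\sum_{j=0}^i f_j}\cdot 4^{i+1}=2^{\,2^{i+1}+(2^{i+1}-1)+2(i+1)}=2^{\,2^{i+2}}\cdot 2\cdot 4^{i}.
\]
Using $i=\lceil\log\log M\rceil<\log\log M+1$, I would then bound (all logarithms base $2$): $2^{i+2}<8\log M$, hence $2^{2^{i+2}}<M^{8}$; next $4^{i}<4(\log M)^2$, hence $2\cdot 4^{i}<8(\log M)^2$; and $2(i+2)<2(\log\log M+3)$. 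Substituting these three estimates into \eqref{eq:lmquessfg}, the leading constants combine as $2\cdot 8=16$ and yield
\[
  CR\le 1+\frac{16}{d}\bigl[\log\log M+3\bigr]\cdot M^{8}\cdot\log^2 M,
\]
which is exactly the claimed bound.

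\emph{Main obstacle.} The genuinely delicate part is the double-exponential bookkeeping in Step~2: one must see that feeding $i\approx\log\log M$ into the tower $2^{2^{i+2}}$ collapses it to the polynomial factor $M^{8}$ — this is precisely where the base-$2$ logarithm and the choice $f_i=g_i=2^i$ are essential — while the $4^{i}$ factor produces the $\log^2 M$ term and the linear factor $i$ produces the $\log\log M$ term. A secondary wrinkle is the regime where $M$ is only a constant (small $v$, small $d$), where $\log\log M$ may be nonpositive; there the robot finishes within a constant number of rounds, so one simply reads a constant bound off \eqref{eq:lmquessfg} at a constant round index and observes that the competitive ratio is anyway $O(1)$.
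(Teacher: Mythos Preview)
Your proposal is correct and follows essentially the same approach as the paper: bound the termination round via the first index at which both guesses dominate $d$ and $\frac{1}{1-v}$, then substitute into Lemma~\ref{lm:no_knowledge_upper_2fg} with $f_j=g_j=2^j$ and collapse the resulting tower. The only cosmetic difference is indexing: the paper argues contrapositively that if the target is caught in round $i+1$ then the round $i-1$ guesses must have failed, giving $i-1<\log\log M$ directly, whereas you bound $i^\star\le\lceil\log\log M\rceil$ and then use monotonicity of the right-hand side of \eqref{eq:lmquessfg}; both lead to the same arithmetic and the same constant $16$.
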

\begin{proof}
  Observe that if the robot finds the target in round $i+1$, then by design, one or both of the robot's round $i-1$ guesses for the target's speed ($1-2^{-2^{i-1}}$) or initial distance ($2^{2^{i-1}}$) must have been too low, otherwise the robot would have found the target in an earlier round.
  In other words, either $1-2^{-2^{i-1}} < v$ or $2^{2^{i-1}} < d$.
  It follows, then that $i-1 < \log \log \max \left( d, \frac{1}{1-v} \right)$.
  Then by Lemma~\ref{lm:no_knowledge_upper_2fg}, an upper bound on the competitive ratio is given by 
  \begin{align*}
    CR &\leq 1 + \frac{2(i+2)}{d} \cdot 2^{g_{i+1}} \cdot 2^{ \sum_{j=0}^i f_j } \cdot 4^{i+1} \\
    &= 1 + \frac{2(i+2)}{d} \cdot 2^{2^{i+1}} \cdot 2^{ 2^{i+1}-1 } \cdot 4^{i+1} \\ 
    &= 1 + \frac{(i-1)+3}{d} \cdot \left(2^{2^{i-1}}\right)^8 \cdot 16 \left(2^{i-1}\right)^2 \\ 
    &= 1 + \frac{16}{d} \left[ \log \log \max\left(d, \frac{1}{1-v}\right) + 3\right] \cdot \max\left(d, \frac{1}{1-v}\right)^{8}
    \cdot \log^2 \max\left(d, \frac{1}{1-v}\right)
  \end{align*}
  which proves Theorem~\ref{thm:no_knowledge_away_upper_alt}.
\end{proof}

\begin{remark}
  Observe that a lower bound of $1+8\frac{1+v}{(1-v)^2}$ follows directly from the \linebreak {\tt NoDistance/Away} model.
\end{remark}

\subsection{The {\tt NoKnowledge/Toward} Model}
\label{sec:no_knowledge_toward}

We can prove the following theorem. 

\begin{mythmrep}\label{thm:no_knowledge_toward}
The optimal competitive ratio is $1 + \frac 1v$ and is given by the waiting Algorithm.
\end{mythmrep}
\begin{proof}
  The upper bound follows directly from Theorem~\ref{thm:waiting}.
  For the lower bound, consider an algorithm where the robot does not wait forever and instead moves a distance $d^\prime > 0$ to the right (without loss of generality -- a symmetric argument for the case where the robot moves to the left follows trivially) after waiting at the origin for time $t \geq 0$.
  Then consider the scenario where the target with speed $v = \frac{d}{t+d^\prime}$ is initially at $-d$ for any distance $d \geq 1$.
  Thus, the target reaches the origin at exactly the time the robot reaches $d^\prime$ and so their earliest possible meeting time is
  \begin{align*}
    t+d^\prime + \frac{d^\prime}{1+v} = \frac{d}{v} + \frac{d^\prime}{1+v} \geq \frac{d}{v}
  \end{align*}
  Thus, the competitive ratio is at least
  \begin{align*}
    \frac{d/v}{d/(1+v)} = 1+\frac{1}{v}
  \end{align*}
  This proves Theorem~\ref{thm:no_knowledge_toward}.
\end{proof}

\section{Conclusion}
We considered linear search for an autonomous robot searching for an oblivious moving target on an infinite line. Two scenarios were analyzed depending on whether the target is moving towards or away from the origin (and this is known to the robot). In either of these two scenarios we considered the knowledge the robot has about the speed and starting distance of the target. For each scenario we gave search algorithms and analyzed their competitive ratio for the four possible cases arising. Our bounds are tight in all cases when the target is moving towards the origin. They are also shown to be tight when the target is moving away from the origin and its speed $0 < v < 1$ is known to the robot; for this scenario we also obtain upper bounds when $v$ is not known to the robot. It remains an open problem to prove tight bounds for the case when $v$ is unknown to the robot and the target is moving away from the origin. It also remains open to find tight bounds for the case where the direction of movement of the target is unknown.

\bibliography{refs}

\end{document}